\newtheorem{theorem}{Theorem}
\newtheorem{lemma}[theorem]{Lemma}
\newtheorem{corollary}[theorem]{Corollary}
\theoremstyle{definition}
\newtheorem{definition}[theorem]{Definition}
\newcommand{\fig}[1]{\figurename~\ref{#1}}
\g@addto@macro\bfseries{\boldmath}
\title{Planar 3-SAT with a Clause/Variable Cycle}
\author{Alexander Pilz\affiliationmark{1}\thanks{Supported by a Schr\"odinger fellowship of the Austrian Science Fund (FWF): J-3847-N35.
This work was done while the author was at the Department of Computer Science of ETH Z\"urich.}}
\affiliation{Institute of Software Technology,
Graz University of Technology, Austria.}
\keywords{3-SAT, planar graph, Hamiltonian cycle}
\begin{document}

\maketitle

\begin{abstract}
In the \textsc{Planar 3-SAT} problem, we are given a 3-SAT formula together with its incidence graph, which is planar, and are asked whether this formula is satisfiable.
Since Lichtenstein's proof that this problem is NP-complete, it has been used as a starting point for a large number of reductions.
In the course of this research, different restrictions on the incidence graph of the formula have been devised, for which the problem also remains hard.

In this paper, we investigate the restriction in which we require that the incidence graph can be augmented by the edges of a Hamiltonian cycle that first passes through all variables and then through all clauses, in a way that the resulting graph is still planar.
We show that the problem of deciding satisfiability of a 3-SAT formula remains NP-complete even if the incidence graph is restricted in that way and the Hamiltonian cycle is given.
This complements previous results demanding cycles only through either the variables or clauses.

The problem remains hard for monotone formulas, as well as for instances with exactly three distinct variables per clause.
In the course of this investigation, we show that monotone instances of \textsc{Planar 3-SAT} with exactly three distinct variables per clause are always satisfiable, thus settling the question by Darmann, D\"ocker, and Dorn on the complexity of this problem variant in a surprising way.
\end{abstract}

\section{Introduction}
Let $\phi$ be a Boolean formula in conjunctive normal form (CNF) and let $G_\phi$ be a graph whose vertices are the variables and the clauses of $\phi$ such that (1) every edge of $G_\phi$ is between a variable and a clause and (2) there is an edge between a variable~$v$ and a clause $c$ if and only if $v$ occurs in $c$ (negated or unnegated).
We call $\phi$ a \emph{CNF formula} and $G_\phi$ is called the \emph{incidence graph} of $\phi$.
A CNF formula is a \emph{3-SAT formula} if every clause contains at most three variables.
(We will also discuss the case where every clause contains exactly three distinct variables.)
The \textsc{Planar 3-SAT} problem asks whether a given 3-SAT formula~$\phi$ is satisfiable, given that $G_\phi$ is a planar graph.
This problem has been shown to be NP-complete by Lichtenstein~\cite{lichtenstein}.
(In contrast to the general version, a PTAS is known for maximizing the number of satisfied clauses for the planar version of the 3-SAT problem~\cite{ptas}.)
See \figurename~\ref{fig_incidence_both} for drawings of an incidence graph.

\begin{figure}
\centering
\includegraphics[width=\columnwidth]{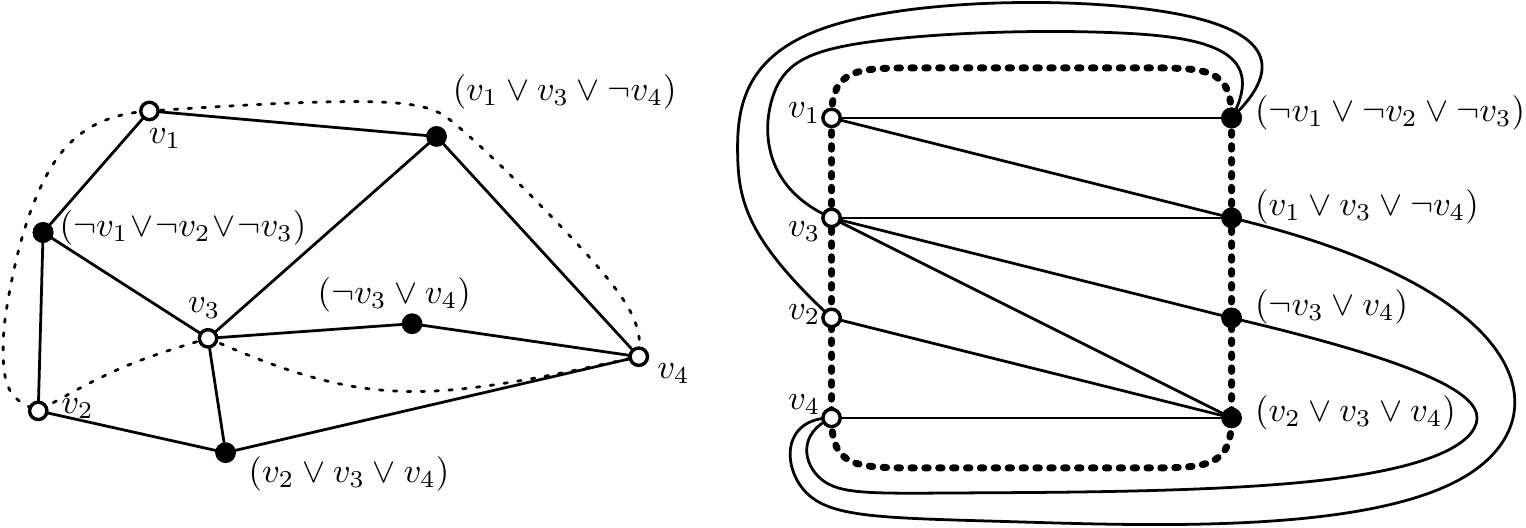}
\caption{
Left: An incidence graph for a 3-SAT formula that is planar.
Clauses are black vertices and variables are white vertices.
The graph is augmented by a spanning cycle (dotted) on the variable vertices.
Right: A \textsc{Linked Planar 3-SAT} instance, with a Hamiltonian cycle $\kappa$ (dotted) passing through the variables and the clauses.
(At most two edges of $\kappa$ may already be present in the incidence graph.)
}
\label{fig_incidence_both}
\end{figure}

Reducing from \textsc{Planar 3-SAT} is a standard technique to show NP-hardness of problems in computational geometry.
In these reductions, the vertices and edges of $G_\phi$ are replaced by gadgets (consisting of geometric objects) that influence each other.
However, it is often useful to have further restrictions on $G_\phi$ or on how $G_\phi$ can be embedded.

Lichtenstein's reduction already contains such a restriction: the problem remains NP-complete even if the graph remains planar after adding a cycle whose vertices are exactly the variables of the formula~\cite{lichtenstein}, and this cycle is part of the input.
We call it a \emph{variable cycle}.
(Note that this does not change the formula $\phi$, but merely says that an algorithm can expect additional structure on the incidence graph~$G_\phi$.)
This fact allows for placing the variable vertices along a line, connected to ``three-legged'' clauses above and below that line (stated more explicitly in~\cite{knuth_ortho}).
For later reference, we make this notion more explicit.
In a \emph{three-legged embedding} of an incidence graph, variables are represented by disjoint line segments on the $x$-axis, and clauses are represented by points;
if a variable occurs in a clause, the variable's segment is connected to the clause's point by a curve consisting of a vertical line segment and at most one horizontal line segment, such that no two such edges intersect (except at their endpoints at a shared vertex); see \figurename~\ref{fig_three_legs}.

De Berg and Khosravi~\cite{de_berg} showed that it is also possible to have all literals in the clauses above the line to be positive, and all literals in clauses below the line negative.
In Lichtenstein's reduction, one may as well add a \emph{clause cycle} whose vertices are exactly the clauses of the formula while keeping the graph planar~\cite{clause_cycle}.

We call the \textsc{Planar 3-SAT} variants in which we require a variable cycle and a clause cycle \textsc{Var-Linked Planar 3-SAT} and \textsc{Clause-Linked Planar 3-SAT}, respectively.%
\footnote{While Lichtenstein's definition of planar 3-SAT~\cite{lichtenstein} already requires the cycle through the variables, this property is often considered an explicit restriction.
We thus follow the terminology of Fellows et al.~\cite{fellows} to emphasize when the variable cycle is needed.}
Even though the incidence graphs constructed by Lichtenstein can be augmented with both a clause cycle~\cite{clause_cycle} and a variable cycle, one cannot adapt Lichtenstein's construction to always obtain both cycles without any crossings.%
\footnote{If not stated otherwise, we will implicitly require the incidence graphs augmented by the additional edges to be planar throughout this paper.}

\begin{figure}
\centering
\includegraphics[page=3]{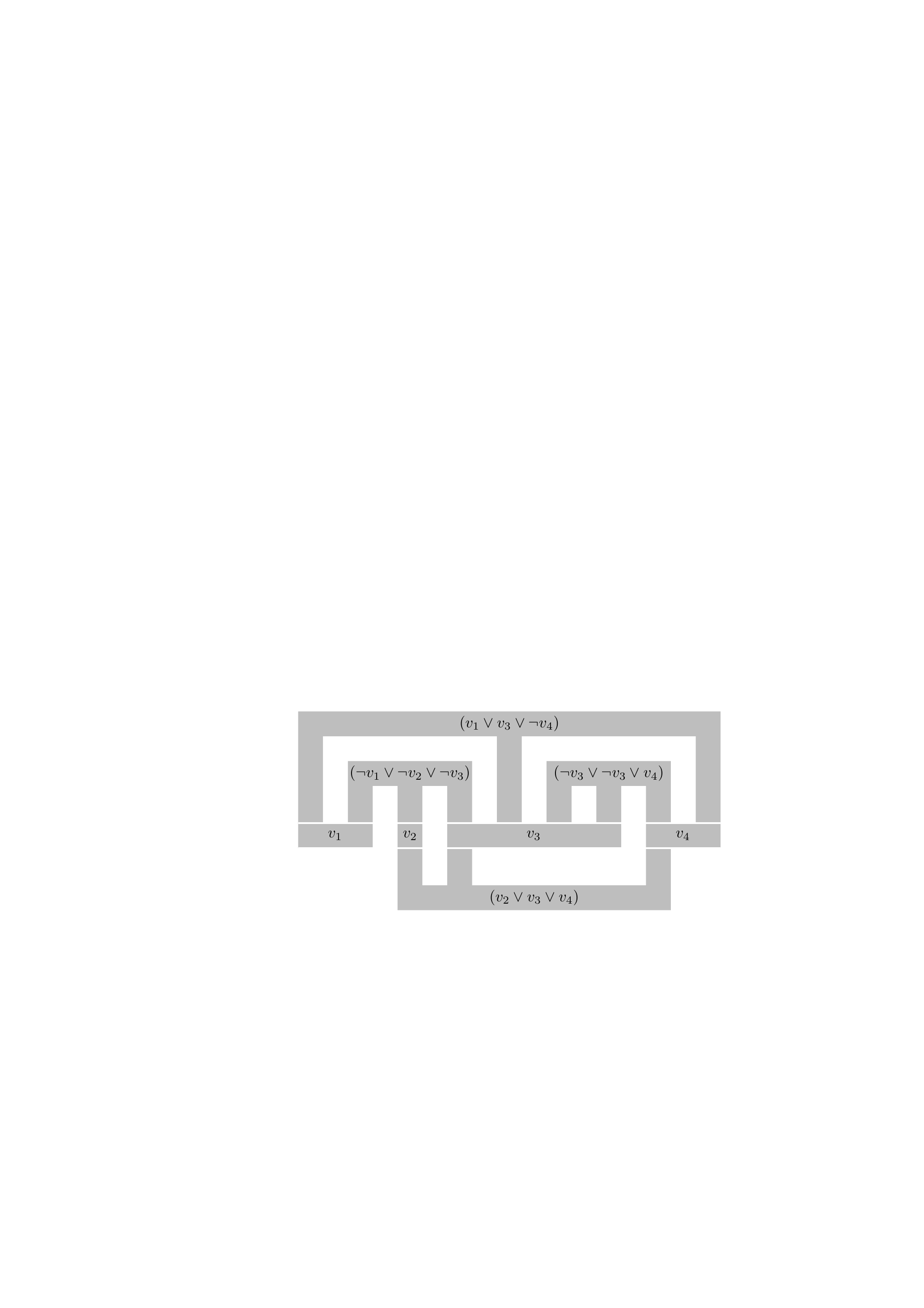}
\caption{
A three-legged embedding of a \textsc{Planar 3-SAT} instance with variables on a line similar to~\cite[p.~425]{knuth_ortho} (top).
(There, two-variable clauses are transformed to three-variable clauses that contain one literal twice, a construction that is not necessary but possible for the initial graph in our reduction.)
An augmentation by a cycle through the variable vertices is shown dotted.
Usually, the elements of the drawing are ``thickened'' to allow for the embedding of gadgets (bottom).
}
\label{fig_three_legs}
\end{figure}

Our research is motivated by the following problem that attempts to combine these restrictions.
See \figurename~\ref{fig_incidence_both}~(right) for an accompanying illustration.

\begin{definition}[\textsc{Linked Planar 3-SAT}]
Let $G_\phi = (C \cup V, E)$ be the incidence graph of a 3-SAT formula~$\phi$, where $C$ is the set of clauses and $V$ is the set of variables of~$\phi$.
Further, let $\kappa$ be a Hamiltonian cycle of $C \cup V$ that first visits all elements of $C$ and then all elements of $V$.
Suppose that the union of $G_\phi$ and $\kappa$ is a planar graph.
The \textsc{Linked Planar 3-SAT} problem asks, given $\phi$, $G_\phi$, and $\kappa$, whether $\phi$ is satisfiable.
\end{definition}

Related problems in which all variables or all clauses can be drawn incident to the unbounded face are known to be in~P, due to results by Knuth~\cite{nested}, and Kratoch\'{v}il and K\v{r}iv{\'{a}}nek~\cite{co_nested}, respectively.
In particular, this is the case when there is a variable cycle and a path connecting all clauses or vice versa.
Also, observe that if there is a variable cycle and at each variable the edges to clauses emanate to the same side of the cycle, then the incidence graph can be re-drawn with the variables on the unbounded face (after removing the variable cycle);
the analogous holds for a clause cycle.
One way of tackling the \textsc{Linked Planar 3-SAT} problem could be to show that $G_\phi$ has bounded treewidth.
For such instances, the satisfiability of $\phi$ can be decided in polynomial time~\cite{counting_truth_assignments}.
(This generalizes the above-mentioned results, as every $k$-outerplanar graph has treewidth at most $3k-1$~\cite{bodlaender}\footnote{An outerplanar graph is 1-outerplanar, and a $k$-outerplanar graph is $(k-1)$-outerplanar after removing the vertices incident to the unbounded face.}; see also Demaine's lecture notes~\cite{demaine_lecture}.)
However, with the right perspective on the \textsc{Linked Planar 3-SAT} problem, it will be easy to observe that there are formulas whose incidence graph has a grid minor with a linear number of vertices (and thus such graphs have unbounded treewidth).
It is the same perspective through which we will show NP-completeness of the problem in Section~\ref{sec_paths}, using a reduction from \textsc{Planar 3-SAT}.
We note that requiring an arbitrary Hamiltonian cycle is not a restriction:
As the incidence graph is bipartite, it is known that its page number is two~\cite{left_first}.%
\footnote{The \emph{page number} of a graph is $k$ if it can be drawn without crossings on $k$ pages (i.e., half-planes) of a book with all vertices on the book's spine (i.e., the line bounding all half-planes).}
Hence, we can always add a Hamiltonian cycle through the variables and clauses in a planar way (possibly re-using edges of the incidence graph).

\subsection{Motivation}
Restrictions on the problem to reduce from can make NP-hardness reductions simpler.
For reductions from \textsc{Planar 3-SAT}, it is common to actually reduce from \textsc{Var-Linked Planar 3-SAT}, using the variable cycle, in particular the ``three-legged'' embedding of~\cite{knuth_ortho}.
Also, the clause cycle has been used~\cite{clause_cycle,fellows,few_lines}.
For an exhaustive survey on the numerous variants of \textsc{Planar 3-SAT}, see the thesis of Tippenhauer~\cite{tippenhauer}.

One motivation for considering \textsc{Linked Planar 3-SAT} is the framework for showing NP-hardness of platform games by Aloupis et al.~\cite{nintendo}.
In this class of reductions from 3-SAT to such games, a player's character starts at a specified position and traverses all variable gadgets, making a decision on their truth value.
Clauses connected to the satisfied literal can be ``unlocked'' by visiting these clauses.
Finally the player's character has to traverse all clause gadgets to reach the finish (called the ``check path'').
The framework then requires a game-specific implementation of the gadgets for start, finish, variables, clauses, and crossovers.
Reducing from \textsc{Linked Planar 3-SAT} removes this dependency on crossover gadgets.
(In particular, Theorem~\ref{thm_side} can be used to show that the traversal through the variables can be done without crossings.)
See \cite[Section~2.1]{nintendo} for a more detailed description.

\subsection{Results}
A summary of presented results, as well as results about related problems, is given in Table~\ref{tbl:overview}.
We first prove that \textsc{Linked Planar 3-SAT} is NP-complete.
In the following sections, we refine the construction to show that restricted variants of the problem remain hard as well.
In particular, we do this for \textsc{Monotone Planar 3-SAT}; here, every clause contains either only negated literals or only unnegated ones.
Further, we show hardness for a variant in which the edges leaving a clause to negated variables are all on the same side of~$\kappa$ (recall that $\kappa$ is the Hamiltonian cycle that first visits all variables and then all clauses), and edges to unnegated ones are on the other side.
Also, we may require that all clauses contain exactly three distinct variables.
A cycle $\kappa$ in instances of \textsc{Positive Planar 1-in-3-SAT} (which requires exactly one true literal in each clause, and no literal is negated) also keeps the problem hard.
Table~\ref{tbl:overview} gives an overview of our results and related work.

Finally, we discuss settings in which the planarity constraint is fulfilled only by satisfiable formulas.
In particular, we show that planar CNF formulas with at least four variables per clause are always satisfiable.
The same holds for instances of \textsc{Monotone Planar 3-SAT} with exactly three variables per clause.
This solves an open problem by Darmann, D\"ocker, and Dorn~\cite{darmann_planar,darmann_journal}, who show that the corresponding problem with at most three variables per clause remains NP-complete with bounds on the variable occurrences, which refines the result of de Berg and Khosravi~\cite{de_berg}.

\begin{table}
\footnotesize
\centering
\begin{tabular}{|p{13em}|c|c|c|c|}
\hline
 & General: & Var-linked: & Clause-linked: & Linked: \\ \hline \hline
\textsc{Planar 3-SAT} & NP-c.~\cite{lichtenstein} & NP-c.~\cite{lichtenstein} & NP-c.~\cite{clause_cycle} & NP-c.~(Thm.~\ref{thm:linked_planar_3_sat}) \\ \hline
\textsc{Planar 3-SAT} with 3 literals per clause & NP-c.~\cite{mansfield} & NP-c.~~(Sec.~\ref{sec:different_cycles}) & NP-c.~(Sec.~\ref{sec:different_cycles}) & NP-c.~(Thm.~\ref{thm:linked_three_distinct}) \\ \hline
\textsc{Monotone Planar 3-SAT} & NP-c.~\cite{de_berg} & NP-c.~\cite{de_berg} & NP-c.~(Sec.~\ref{sec:different_cycles}) & NP-c.~(Thm.~\ref{thm:monotone_linked}) \\ \hline
\textsc{Positive Planar 1-in-3-SAT} & NP-c.~\cite{mulzer_rote} &  NP-c.~\cite{mulzer_rote} &  NP-c.~\cite{few_lines} &  NP-c.~(Thm.~\ref{thm:planar_1_in_3}) \\ \hline
\textsc{Planar 3-SAT} with negated edges on one side & n/a &  NP-c.~(Sec.~\ref{sec:different_cycles}) &  NP-c.~(Sec.~\ref{sec:different_cycles}) &  NP-c.~(Thm.~\ref{thm_side}) \\ \hline
\textsc{Planar SAT} with three negated/unnegated per clause & P~(Thm.~\ref{thm:three_distinct_satisfiable}) & & & \\ \hline
\textsc{Planar SAT} with $\geq 4$ variables per clause & P~(Thm.~\ref{thm:four_satisfiable}) & & & \\ \hline
\end{tabular}
\caption{Summary of results presented in this paper and related problems.}
\label{tbl:overview}
\end{table}

\section{NP-hardness of \textsc{Linked Planar 3-SAT}}\label{sec_paths}
We now show that \textsc{Linked Planar 3-SAT} is NP-hard by reducing \textsc{Planar 3-SAT} to it.
We are thus given a 3-SAT formula $\tilde \phi$ with variable set $\tilde V$, clause set $\tilde C$, and incidence graph $G_{\tilde \phi}$.
Our goal is to construct another formula $\phi$ that is an instance of \textsc{Linked Planar 3-SAT} with incidence graph $G_{\phi}$ and Hamiltonian cycle $\kappa$ such that $\phi$ is satisfiable if and only if $\tilde \phi$ is.
The construction of $G_{\phi}$ will be given by an embedding.

We start with a suitable embedding of the initial graph $G_{\tilde \phi}$.
We produce an embedding~$\Gamma$ of $G_{\tilde \phi}$ on the integer grid (i.e., all vertices have integer coordinates), with straight-line edges.
(It seems easier to describe the construction with a fixed straight-line embedding of $G_{\tilde \phi}$ on the integer grid.
However, this is only one possible variant, as discussed at the end of this section.)
The crucial property we use is that any vertical line intersects an edge at most once (i.e., each edge is an $x$-monotone curve);
this is of course fulfilled by a straight-line embedding.
We further require that the size of the grid is polynomial in the size of~$\tilde \phi$.
It is well-known that a planar graph with $n$ vertices can be embedded with straight-line edges on a $O(n) \times O(n)$ grid in $O(n)$ time~\cite{planar_drawing,schnyder}.
We can take such an embedding and perturb the clause vertices s.t.\ each variable has an even $x$-coordinate, and the $x$-coordinate of each clause is odd.
This can be done without introducing crossings by choosing the grid sufficiently large and scaling $\Gamma$; a blow-up by a factor polynomial in~$n$ is sufficient.%
\footnote{We can consider the smallest horizontal distance between a vertex and a non-incident edge.
This distance $v$ is rational with numerator and denominator quadratic in the largest coordinate.
By multiplying the $x$-coordinates of the vertices by $2/v$, rounding, and again multiplying by 2, we get the desired embedding, similar to~\cite[Lemma~6.1]{simple_flip_hard}.
But again, we emphasize that our use of a straight-line embedding is for convenience only and can be replaced in a way discussed at the end of this section.}
More specifically, we scale $\Gamma$ by a polynomial multiple of 2 and increase the $x$-coordinate of each clause vertex by~1.

With a suitable drawing of $G_{\tilde \phi}$ at hand, we start the drawing of $G_\phi$ with the curve that will contain the cycle $\kappa$ (we add its vertices later).
It can be partitioned into two paths, one that will contain the elements of the variable set $V$ ($\kappa_V$) and one for the elements of the clause set ($\kappa_C$).
In our drawing shown in~\figurename~\ref{fig_grid_region}, we obtain a rectangular region $R$, whose intersection with $\kappa$ consists of vertical line segments of unit distance, in alternation belonging to $\kappa_V$ and $\kappa_C$.
We call them the \emph{clause segments} and \emph{variable segments}, respectively.
We assume that the segments are placed on the integer grid with unit distance, with the variable segments having even $x$-coordinates and the clause segments having odd $x$-coordinates.
In other words, $R$ represents a grid in which the columns are traversed in alternation by $\kappa_V$ and $\kappa_C$.

\begin{figure}
\centering
\includegraphics{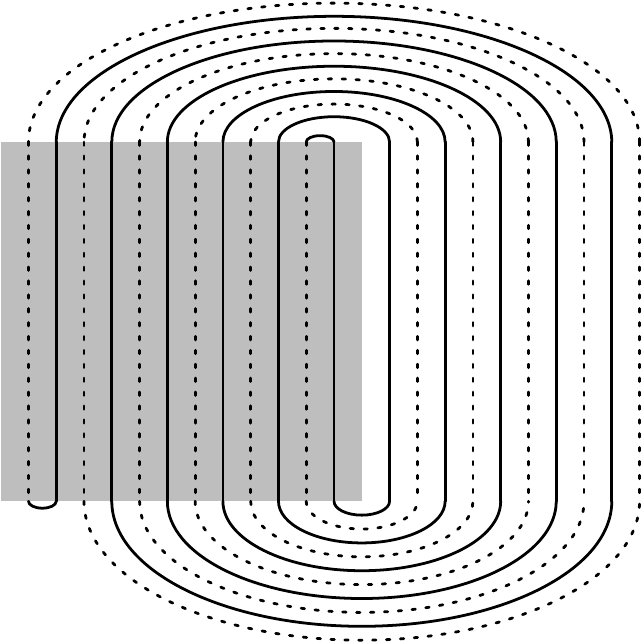}
\caption{Drawings of the paths $\kappa_C$ (solid) and $\kappa_V$ (dotted), containing the clauses and variables of $\phi$, respectively.
They intersect a rectangle $R$ (gray) in vertical segments.
The incidence graph is drawn inside~$R$.
}
\label{fig_grid_region}
\end{figure}

Place $\Gamma$ inside $R$.
By the construction of $\Gamma$, all its variable vertices have even $x$-coordinates and are thus on variable segments, and its clause vertices are on clause segments.
Throughout this paper, we will make use of this construction in various proofs, and thus call it an \emph{$R$-drawing} for future reference.
We obtain $G_\phi$ by replacing the edges of $\Gamma$ by gadgets, consisting of subgraphs of~$G_\phi$.
In the construction, we will make use of ``cyclic implications'', effectively copying the value of a variable;
the graph $G_\phi$ will contain many pairs of variables $x$ and $x'$ with a clause $c_x = (\neg x \lor x')$ and a clause $c'_x = (x \lor \neg x')$.
Clearly, $x = x'$ in any satisfying assignment.
We depict the negation in such a clause $c_x$ by an arrow from $x$ to $c_x$, and from $c_x$ to $x'$ (where the arrows are also edges of the incidence graph).
In general, we use the convention that an arrow from a variable to a clause denotes that the variable occurs negated in that clause, while an arrow from the clause to the variable means that the variable occurs unnegated.

We replace each edge~$e$ of $\Gamma$ by a sequence of so-called \emph{connector gadgets}.
A connector gadget consists of two variables $x$ and $x'$, and two clauses $c_x = (\neg x \lor x')$ and $c_x' = (\neg x' \lor x)$ (implying $x = x'$ in any satisfying truth assignment).
The variable vertices $x$ and $x'$ are placed on the intersections of $e$ with two consecutive variable segments in $R$, and the clause vertices $c_x$ and $c_x'$ are placed on the clause segment between them (also close to the crossing of $e$ and the clause segment).
Note that these new vertices subdivide~$\kappa$.
An edge in $\Gamma$ connecting a variable $\tilde v \in \tilde V$ to a clause $\tilde c \in \tilde C$ that crosses $\kappa$ can be replaced by a sequence of connector gadgets in a sufficiently small neighborhood of the edge, as shown in \figurename~\ref{fig_edge_replacement}.
Thus, in the resulting drawing, we have subdivided $\kappa$ to remove crossings with $e$, and the resulting formula is satisfiable if and only if the initial formula is satisfiable.

\begin{figure}
\centering
\includegraphics{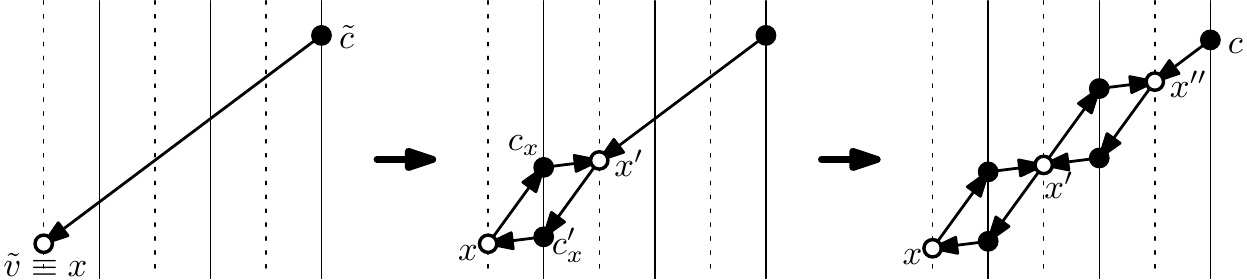}
\caption{
An edge of the initial incidence graph crossing the cycle $\kappa$ (left) can be replaced by a sequence of connector gadgets.
Variable vertices are white dots, clause vertices are black.
The first connector gadget is attached to a variable $\tilde v$, i.e., $\tilde v$ plays the role of variable~$x$ in the connector gadget.
The two variables in the connector gadget (middle) have the same value, and the occurrence of a variable $x$ in the clause $\tilde c$ is replaced by the last variable~$x''$ to obtain an equivalent clause~$c$.
}
\label{fig_edge_replacement}
\end{figure}

Replacing the edges of $\Gamma$ by the connector gadgets results in a drawing of $G_\phi$.
As all edges are $x$-monotone, they intersect $\kappa$ as required for the construction of the edges (and in particular each edge crosses clause segments and variable segments in alternation).
Thus, all crossings with $\kappa$ are replaced.
Thus this drawing is planar and contains a number of vertices that is polynomial in $|C|$, as the number of crossings of an edge in $\Gamma$ with clause and variable segments (and thus the number of vertices needed to replace the edge) is bounded by the grid size.
(For a more precise estimate of the number of vertices see the discussion at the end of this section.)
Also, all ``new'' clauses contain only two variables (we will see a modification of the reduction without this property).
As none of the edges in the resulting embedding of $G_\phi$ crosses the initially drawn cycle for~$\kappa$, $G_\phi$ can be augmented by $\kappa$ along that cycle maintaining planarity.
Finally, observe that $\phi$ is satisfiable if and only if $\tilde \phi$ is:
the two variables of a connector gadget must have the same value, and the clauses not part of the connector gadget are the clauses of $\tilde C$ in which we replaced variables by others that have to be equal.
We thus obtain our main result.

\begin{theorem}\label{thm:linked_planar_3_sat}
\textsc{Linked Planar 3-SAT} is NP-complete.
\end{theorem}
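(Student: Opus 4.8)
The plan is to establish the two halves of NP-completeness separately. Membership in NP is immediate: given a truth assignment, one can check in polynomial time that every clause of $\phi$ is satisfied, so the substantive direction is NP-hardness. For this I would formalize the reduction from \textsc{Planar 3-SAT} developed in the preceding construction and argue that it runs in polynomial time and preserves satisfiability. Thus the skeleton of the proof is: (i) the reduction is computable in polynomial time; (ii) $\phi$ is satisfiable if and only if $\tilde\phi$ is; (iii) the union of $G_\phi$ and $\kappa$ is planar and $\kappa$ visits all clauses before all variables.

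First I would fix the straight-line embedding $\Gamma$ of $G_{\tilde\phi}$ on an $O(n)\times O(n)$ grid guaranteed by \cite{planar_drawing,schnyder}, then apply the scaling-and-shifting step so that every variable vertex lands in a column of even $x$-coordinate and every clause vertex in a column of odd $x$-coordinate. Placing $\Gamma$ inside $R$ identifies the even and odd columns with the variable segments of $\kappa_V$ and the clause segments of $\kappa_C$, respectively. The key geometric fact I would invoke is the $x$-monotonicity of the straight-line edges: each edge of $\Gamma$ crosses the vertical segments of $\kappa$ exactly once per column it spans, and it meets variable and clause segments strictly in alternation. This is precisely the crossing pattern that a chain of connector gadgets is designed to absorb.

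The heart of the correctness argument is the connector gadget. I would show that the two clauses $c_x=(\neg x\lor x')$ and $c'_x=(\neg x'\lor x)$ force $x=x'$ in every satisfying assignment, so a chain of such gadgets replacing an edge of $\Gamma$ behaves as a single wire carrying one truth value. Consequently, replacing the occurrence of $\tilde v$ in a clause $\tilde c$ by the terminal variable of the corresponding chain yields a clause equivalent to $\tilde c$ under any satisfying assignment, and conversely any satisfying assignment of $\tilde\phi$ extends to one of $\phi$ by propagating values along the chains; this gives the biconditional in (ii). For (iii) I would observe that each edge of $\Gamma$ is replaced within a small neighborhood of itself and each gadget removes exactly the crossings of that edge with $\kappa$, so no edge of the final drawing crosses the initially drawn curve for $\kappa$; hence $\kappa$ can be added along that curve without destroying planarity, and by construction $\kappa$ traverses the clause segments and variable segments so that all clauses precede all variables.

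The main obstacle I anticipate is the size bound rather than the logic. I would need to argue carefully that the number of connector gadgets stays polynomial: each edge of $\Gamma$ is split into a number of pieces bounded by the grid width, which is polynomial in $n$ after the blow-up, and $\Gamma$ has $O(n)$ edges, so $G_\phi$ has polynomially many vertices and the construction is polynomial-time. The remaining care is bookkeeping—verifying that the gadget chains for distinct edges occupy pairwise disjoint neighborhoods and therefore do not interfere—which follows from routing each chain inside a sufficiently thin tube around its edge of the fixed planar embedding $\Gamma$.
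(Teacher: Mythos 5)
Your proposal is correct and follows essentially the same route as the paper: a reduction from \textsc{Planar 3-SAT} using a straight-line grid embedding with variables on even and clauses on odd $x$-coordinates, $x$-monotone edges crossing the segments of $\kappa$ in alternation, and the identical two-clause connector gadget $(\neg x\lor x')\land(\neg x'\lor x)$ replacing each edge, with the same planarity and polynomial-size arguments. The only addition is your explicit (and correct) remark on NP membership, which the paper leaves implicit.
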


Observe that our construction also does not change the number of satisfying assignments, i.e., the reduction is parsimonious.
Since counting the number of satisfying assignments to a planar 3-SAT formula is \#P-complete~\cite{hunt}, this also holds for \textsc{Linked Planar 3-SAT}.

\begin{theorem}
\textsc{Linked Planar 3-SAT} is \#P-complete.
\end{theorem}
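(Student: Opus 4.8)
The plan is to establish both membership in \#P and \#P-hardness, the latter by re-using the reduction of Theorem~\ref{thm:linked_planar_3_sat} and showing that it is parsimonious. For membership, I would note that an instance of \textsc{Linked Planar 3-SAT} is in particular a 3-SAT formula, so a candidate truth assignment can be verified in polynomial time; hence the function counting its satisfying assignments lies in \#P.

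For hardness, I would invoke the fact that counting the satisfying assignments of a \textsc{Planar 3-SAT} formula is \#P-complete~\cite{hunt}. It then suffices to argue that the polynomial-time reduction $\tilde\phi \mapsto \phi$ built for Theorem~\ref{thm:linked_planar_3_sat} preserves the number of satisfying assignments, i.e.\ that it is parsimonious. The key step is to exhibit an explicit bijection between the satisfying assignments of $\tilde\phi$ and those of $\phi$. Given a satisfying assignment of $\tilde\phi$, I would extend it to $\phi$ by propagating, along each chain of connector gadgets that replaced an edge of $\Gamma$, the value of the original variable $\tilde v$ to all the auxiliary variables $x, x', x'', \dots$ of that chain. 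Because each connector gadget enforces $x = x'$ via the clauses $(\neg x \lor x')$ and $(\neg x' \lor x)$, this is the \emph{only} extension satisfying all connector clauses, so it is unique; conversely, every satisfying assignment of $\phi$ restricts to a satisfying assignment of $\tilde\phi$ on the original variables, since each modified clause $c$ agrees with the original clause $\tilde c$ once the forced identifications are in place. This gives the desired bijection, whence the reduction is parsimonious and the claim follows.

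The main obstacle I expect is purely bookkeeping: verifying that the forced equalities along the connector chains, together with the replacement of the literal of $\tilde v$ by the literal of the last auxiliary variable of each chain, leave the truth value of every original clause unchanged, so that the correspondence is well defined on both sides. Since the connector gadget is designed precisely to copy a variable's value and no auxiliary variable is shared between two chains, this verification is routine, and no gadgetry beyond that of Theorem~\ref{thm:linked_planar_3_sat} is required.
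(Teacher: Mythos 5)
Your proposal is correct and takes essentially the same route as the paper: the paper likewise observes that the reduction of Theorem~\ref{thm:linked_planar_3_sat} is parsimonious (each connector gadget forces $x = x'$, so satisfying assignments are in bijection) and combines this with the \#P-completeness of counting satisfying assignments for \textsc{Planar 3-SAT}~\cite{hunt}. Your version merely spells out the bijection and the \#P-membership check more explicitly than the paper does.
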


\paragraph*{A note on the embedding.}
In our reduction, we were using a straight-line drawing of the incidence graph for convenience.
However, as the reader may have noticed by now, we do not rely on an actual drawing in the plane.
We are merely considering the topology of the embedding to add the edges of $\kappa$ and the connector gadgets in order to construct a planar graph.
The clause and variable segments in the region $R$ can be seen as snapshots of a vertical sweep line passing over the drawing.
With this in mind, we can get an $R$-drawing of a graph by a topological sweep of the incidence graph.
We can think of a bi-infinite curve~$K$ that sweeps over the drawing and that at any time during the sweep intersects each edge in at most one point.
The sweep curve is implemented as a cut of the graph.
See, e.g.,~\cite{guibas_seidel}.
This sweep can be done such that the curve never intersects a clause vertex and a variable vertex at the same time.
As illustrated in \fig{fig_sweep}, such a sweep gives a way of adding the clause and variable segments of the reduction, resulting in an $R$-drawing.
When passing over a clause vertex, we make the intersection of $K$ and $R$ a clause segment (that may no longer be straight-line), and to a variable segment when passing over a variable vertex.
If there are, say, two consecutive variable segments, we place a clause segment corresponding to a state of $K$ in-between these segments.
This way of producing the graph $G_\phi$ allows for a more fine-grained analysis of the number of additional vertices required.
We get at most $2(|\tilde V| + |\tilde C|)-1$ clause and variable segments.
As the number of edges in the initial incidence graph $G_{\tilde \phi}$ is linear in the number of vertices, the number of vertices in the final incidence graph is in $O(|\tilde C|^2)$.

\begin{figure}
\centering
\includegraphics{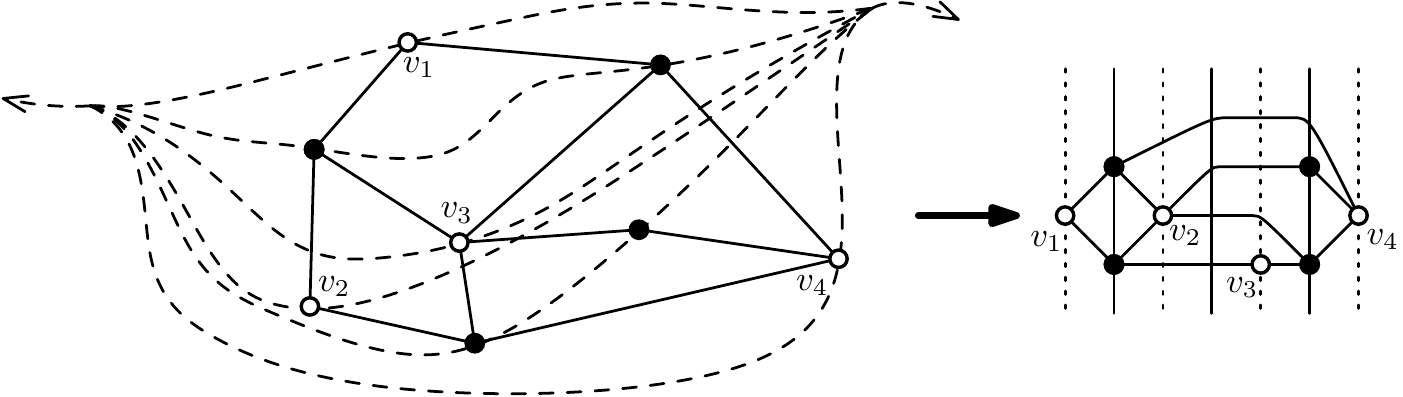}
\caption{A topological sweep of a drawing (left) determines a way of adding clause segments and variable segments for the reduction.}
\label{fig_sweep}
\end{figure}

Let us illustrate the reduction explicitly starting with a \textsc{Var-Linked Planar 3-SAT} instance, not relying on the straight-line embedding.
\figurename~\ref{fig_construction} shows an example of the reduction starting with the three-legged drawing shown in \figurename~\ref{fig_three_legs}.
We can multiply the coordinates of the vertices (and of the bends) by four; this gives enough freedom to embed the vertices accordingly.
Observe that the vertices and bends of such an embedding have coordinates of absolute value at most $4 \cdot 3|\tilde C|$ (and in this representation, observe that we only need to blow the grid up by a factor of~2).
We replace each horizontal bar representing a variable by a sequence of connector gadgets as shown in \figurename~\ref{fig_construction}.
Note that all variables in the gadget have the same value due to the cyclic implication, and all its vertices have distance at most one from the original segment.
The original vertex representing a clause, say, $\tilde c = (v_1 \lor v_3 \lor \neg v_4)$ is replaced by a clause $c = (v_1' \lor v_3' \lor \neg v_4')$.
Observe that due to the cyclic implication, the variables $v_1, v_2, \dots$ need to have the same values as their surrogates $v_1',v_2',\dots$.
Since all coordinates of $\Gamma$ are multiples of four, the gadgets can be drawn inside the ``thickened'' legs of the initial drawing, and thus the new drawing remains crossing-free.

\begin{figure}
\centering
\includegraphics[page=2]{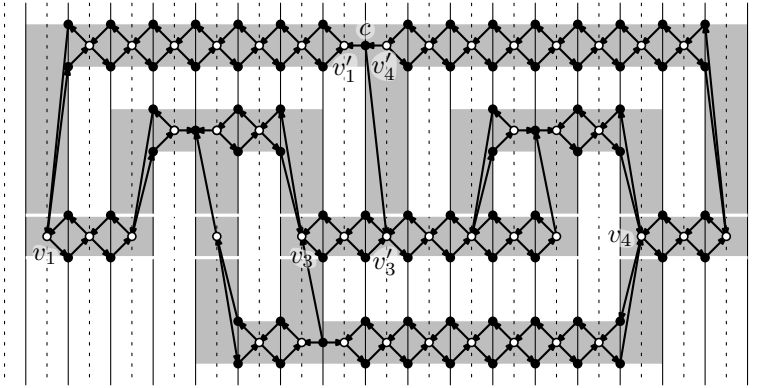}
\caption{A construction based on the three-legged drawing in \figurename~\ref{fig_three_legs}.
A clause $\tilde c = (v_1 \lor v_3 \lor \neg v_4)$ of $\tilde \phi$ in $\Gamma$ is replaced by the clause $c = (v_1' \lor v_3' \lor \neg v_4')$.
The truth value of the variables on the $x$-axis is ``transported'' to the new clause by the according cyclic implications.}
\label{fig_construction}
\end{figure}

\section{Further variants}
We use the main idea of drawing the cycle~$\kappa$ as shown in \fig{fig_grid_region} to obtain similar reductions for variants of the planar satisfiability problem.
If the initial problem is known to be hard, we merely need to find an appropriate connector gadget to replace the crossings of edges of~$G_{\tilde \phi}$ and~$\kappa$.

\subsection{Positive Planar 1-in-3-SAT}
It is easy to transform our reduction to be an instance of \textsc{1-in-3-SAT}, where exactly one variable is true:
we get the ``main'' clauses directly from an initial \textsc{1-in-3-SAT} instance, and the connector gadgets are the same.
We therefore impose two more requirements on the instance.
First, all clauses should have three elements, and second, all literals should be unnegated, i.e., positive.

\begin{definition}[\textsc{Positive Planar 1-in-3-SAT}~\cite{mulzer_rote}]
Given a formula $\phi$ in which each clause contains exactly three distinct unnegated literals, and an embedding of the incidence graph $G_\phi$, the \textsc{Positive Planar 1-in-3-SAT} problem asks whether there exists a satisfying assignment of $\phi$ such that exactly one variable in each clause is true.
\end{definition}

Mulzer and Rote~\cite{mulzer_rote} show that deciding satisfiability of \textsc{Positive Planar 1-in-3-SAT} instances is hard, even if the incidence graph can be augmented by a variable cycle.

\begin{theorem}\label{thm:planar_1_in_3}
The \textsc{Positive Planar 1-in-3-SAT} problem remains NP-complete even for problem instances that are also instances of the \textsc{Linked Planar 3-SAT} problem.
\end{theorem}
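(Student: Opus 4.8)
The plan is to reuse the machinery of Section~\ref{sec_paths} essentially unchanged, starting from an instance that is already positive and of the 1-in-3 type and replacing only the connector gadget. Concretely, I would reduce from \textsc{Positive Planar 1-in-3-SAT} restricted to instances whose incidence graph admits a variable cycle; this is NP-complete by Mulzer and Rote~\cite{mulzer_rote}. Given such an instance, I form its $R$-drawing exactly as before: $\kappa$ is drawn as alternating clause and variable segments of the grid region $R$, and an $x$-monotone embedding $\Gamma$ of the incidence graph is placed inside $R$ with variable vertices on variable segments and clause vertices on clause segments. The ``main'' clauses of the output are then the clauses of the original instance (kept as positive $3$-clauses read under the 1-in-3 rule), with each original variable occurrence replaced by the copy produced at the far end of the gadget chain that replaces the corresponding edge of $\Gamma$.

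The one new ingredient is a positive connector gadget that copies a variable's value under the 1-in-3 semantics. For this I would use the two clauses $K_1 = (a \lor s \lor t)$ and $K_2 = (a' \lor s \lor t)$, where $a$ is the incoming wire variable, $a'$ a fresh outgoing copy, and $s, t$ fresh auxiliary variables local to this gadget. Reading \emph{true} as $1$, the 1-in-3 rule forces $a+s+t = 1$ and $a'+s+t=1$, hence $a=a'$; conversely, for either common value of $a$ and $a'$ the pair $(s,t)$ can be set to satisfy both clauses, so the gadget imposes no constraint beyond $a=a'$. Chaining these gadgets along each edge of $\Gamma$ transmits the truth value across every crossing with $\kappa$, and since the auxiliaries of distinct gadgets are disjoint, the output is 1-in-3 satisfiable if and only if the original instance is, exactly as in the proof of Theorem~\ref{thm:linked_planar_3_sat}. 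All clauses are positive and have three distinct variables, so the output is a genuine \textsc{Positive Planar 1-in-3-SAT} instance.

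The step I expect to require the most care is the planar embedding of this gadget inside the alternating-segment structure, since a positive 1-in-3 equality needs \emph{two} shared auxiliaries rather than the single implication edge used in Section~\ref{sec_paths}. I would place $a, s$ on one variable segment and $a', t$ on the adjacent variable segment, with the two clause vertices on the clause segment between them, and then verify that the gadget is level-planar with respect to these three columns: with the vertical orders $(a,s)$, $(K_1,K_2)$, and $(t,a')$ the six edges can be routed without crossings, the underlying graph being a $4$-cycle on $\{K_1,s,K_2,t\}$ together with the two pendant edges $a K_1$ and $a' K_2$. Because the gadgets lie in the $\kappa$-free gaps between consecutive segments (and one may insert additional segments using the topological-sweep freedom of Section~\ref{sec_paths} wherever a single gadget does not leave enough room), no gadget edge crosses $\kappa$; hence $\kappa$ can be added while preserving planarity, and the output is simultaneously a \textsc{Positive Planar 1-in-3-SAT} and a \textsc{Linked Planar 3-SAT} instance, as required.
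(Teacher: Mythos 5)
Your logical gadget is exactly the one the paper uses---two 1-in-3 clauses $(a \lor s \lor t)$ and $(a' \lor s \lor t)$ sharing the auxiliaries $s,t$, which force $a=a'$---but your embedding of it differs, and that is where the proof breaks. You give the gadget width~2 by putting $s$ on the \emph{same} variable segment as the incoming wire variable and both clauses $K_1,K_2$ on the single clause segment in between. Now consider an original variable $\tilde v$ with three (or more) edges of $\Gamma$ emanating to the same side, say to the right; nothing in Mulzer and Rote's instances bounds variable degree, and any vertex of degree at least five has three edges on one side in an $x$-monotone drawing. Each of the three chains starting at $\tilde v$ must place an auxiliary on $\tilde v$'s own variable segment and its first clause pair on the adjacent clause segment, inside the neighborhood of the corresponding original edge, so the three clause pairs appear on that clause segment in the fixed vertical order of the edges: band~1 above band~2 above band~3. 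Edges inside the strip join the two bounding segments, and two edges with distinct endpoints can be drawn without crossing if and only if their endpoints do not interleave along the strip boundary. Chain~1 has the edge from $\tilde v$ into band~1, chain~3 has the edge from $\tilde v$ into band~3, and chain~2 has two edges from its auxiliary $s_2$ into band~2. Since $s_2 \neq \tilde v$ lies on $\tilde v$'s segment, it is either above or below $\tilde v$: if above, its edges into band~2 interleave with the edge $(\tilde v, K_1^1)$ into band~1; if below, they interleave with the edge $(\tilde v, K_1^3)$ into band~3. Either way there is a forced crossing. Your hedge of inserting extra segments via the topological sweep does not help, because the obstruction sits in the rotation at $\tilde v$ itself---every first gadget insists on a fresh vertex on $\tilde v$'s segment---not in a lack of horizontal room; routing the middle chain leftward first fails the same way when it returns through $\tilde v$'s segment.

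This is precisely why the paper embeds the \emph{same} formula with width~4: both auxiliaries go on the next variable segment over and the two clauses on two \emph{different} clause segments, so the only gadget vertex on $\tilde v$'s segment is $\tilde v$ itself, and all edges leaving $\tilde v$ into one strip share the endpoint $\tilde v$ and hence never cross. The price of width~4 is the bookkeeping you skipped, and which constitutes most of the paper's proof: inserting extra pairs of clause/variable segments to fix parities, placing the gadget differently when a variable lies to the right of its clause, and a special routing when all three variables of a clause lie to its right. Your argument could be repaired by adopting the width-4 placement at least for the first gadget of every chain and then redoing this parity analysis; as written, the construction fails on any instance with a variable whose drawing has three edges leaving to one side.
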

\begin{proof}
The reduction starts again by extending the embedding of the initial \textsc{Positive Planar 1-in-3-SAT} incidence graph by clause and variable segments, i.e., by constructing an $R$-drawing.
In our initial reduction, the connector gadget had a width of two, effectively connecting two variable vertices on neighboring variable segments.
In the current reduction, our connector gadgets have a width of four, and it makes two variables equal that have one variable segment strictly in-between.
Hence, we start off with an $R$-drawing as in our initial reduction and then place two pairs of clause and variable segments between each variable segment and the clause segment to its right, and one pair of a variable and a clause segment between a clause segment and the variable segment to its right (obtaining again an $R$-drawing).
See \fig{fig_extend_r_drawing}.
This can be done by an according topological sweep of the initial embedding, as described in Section~\ref{sec_paths}.
(Alternatively, we can use a straight-line embedding on the integer grid.)
By this construction, all variable and clause vertices remain on the appropriate part of $\kappa$.
Further, the number of clause segments strictly between a variable and a clause to its right is even, 
and is odd for any variable to the right of the clause. 

\begin{figure}
\centering
\includegraphics[page=2]{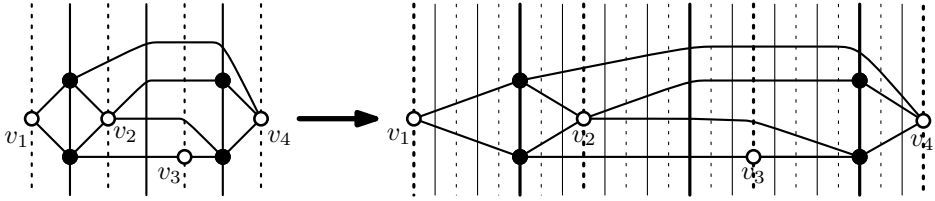}
\caption{An $R$-drawing can be extended by adding further clause and variable segments between the existing one.
}
\label{fig_extend_r_drawing}
\end{figure}

We use two clauses sharing two variables to produce a connector gadget, as shown in \figurename~\ref{fig_1_in_3}.
The two clauses $(x, a, b)$ and $(x', a, b)$ ensure that $x = x'$.
The construction works analogous to the simpler connector gadget of the previous section,%
\footnote{We thank an anonymous referee for simplifying the previously-used gadget.}
except for the following caveat.
The connector gadgets have width 4.
Consider a variable vertex $x$ and a neighboring clause $\tilde c$.
By our construction, if $x$ is to the left of $\tilde c$, the number of pairs of clause and variable segments between $x$ and $\tilde c$ is even.
Hence, the connector gadgets can be placed as shown at the top of \fig{fig_1_in_3}.
If $x$ is to the right of $\tilde c$, we have a similar situation, but one clause segment less, and we place the connector gadget at $\tilde c$ as shown at the bottom of \fig{fig_1_in_3}.
If all of the three variable vertices incident to a clause are to its right, one of the three connector gadgets has to be constructed using the variable segment to the left of the clause, as shown in the sketch on the right of \figurename~\ref{fig_1_in_3}.
(Note that in this case the three connector gadgets incident to the clause can be drawn as shown because there is an odd number of clause segments between each variable and the clause;
also this does not interfere with other gadgets due to the extra clause and variable segments added to the initial $R$-drawing.)
Finally, observe that the number of vertices in the constructed incidence graph is again quadratic in the number of vertices of the initial one by the same arguments as in Section~\ref{sec_paths}.
\end{proof}

\begin{figure}
\centering
\includegraphics{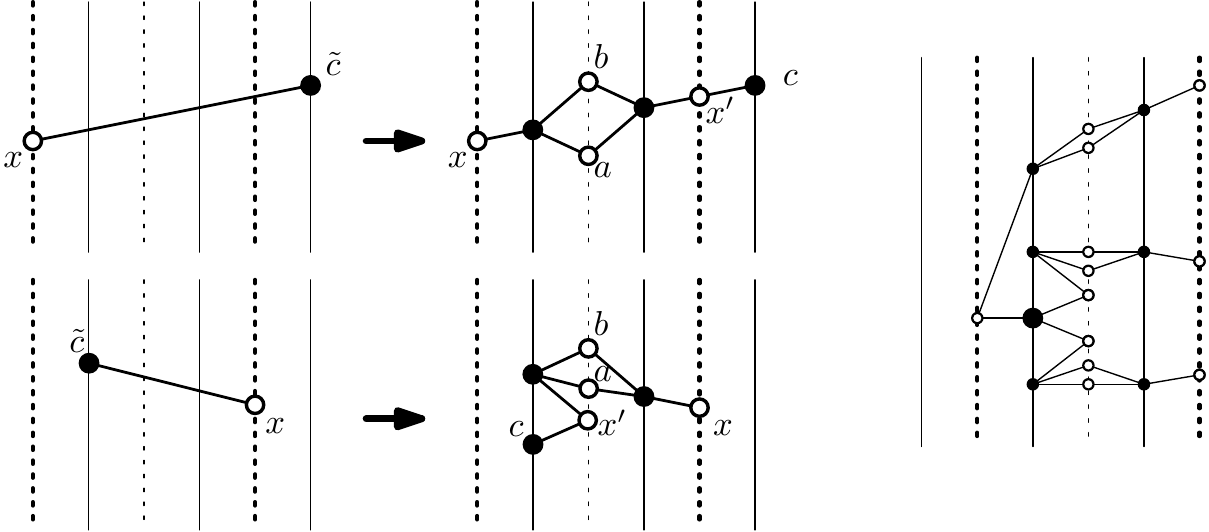}
\caption{Reduction for the \textsc{Positive Planar 1-in-3-SAT} variant.
If $x=1$, then both $a = 0$ and $b = 0$, and thus $x' = 1$ as well.
Otherwise, exactly one of $a$ and $b$ is $1$ and we again have $x = x'$.
When, in the initial graph, the variable vertices incident to a clause have larger $x$-coordinate, the gadgets cannot completely follow the edges, as the connector gadgets have width 4.
Such a situation is shown to the right, where all initial variable vertices are to the right of the initial clause (the fat vertex).}
\label{fig_1_in_3}
\end{figure}

\subsection{Exactly three distinct variables per clause}
If we require the formula to have exactly three distinct variables in each clause, the reduction can also be modified accordingly.
Mansfield~\cite{mansfield} showed how to extend Lichtenstein's construction to obtain a planar 3-SAT formula with exactly three different variables per clause%
\footnote{This restriction has been re-discovered, e.g., in~\cite{hunt}.}
by constructing such a formula with planar incidence graph and a variable that is false in every satisfying assignment.

\begin{theorem}\label{thm:linked_three_distinct}
The \textsc{Linked Planar 3-SAT} problem remains NP-complete even if each clause contains exactly three different variables.
\end{theorem}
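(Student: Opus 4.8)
The plan is to reduce from \textsc{Planar 3-SAT} with exactly three distinct variables per clause, which is NP-complete by Mansfield~\cite{mansfield}, reusing the $R$-drawing machinery of Section~\ref{sec_paths}. Given such an instance $\tilde\phi$, I would build the $R$-drawing exactly as before and replace each crossing of an edge of $\Gamma$ with the cycle $\kappa$ by a connector gadget. The ``main'' clauses cause no trouble: a clause $\tilde c$ of $\tilde\phi$ already has three distinct variables, and after the surrogate substitution performed by the connector gadgets (replacing each $v_i$ by the last copy $v_i'$ of its wire, see \fig{fig_edge_replacement}) the resulting clause $c$ still mentions three distinct variables, since distinct original variables travel along distinct copy-wires. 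Hence the only clauses that violate the ``exactly three distinct variables'' requirement are the two-variable clauses $(\neg x \lor x')$ and $(x \lor \neg x')$ inside the connector gadgets.

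The core of the proof is therefore to redesign the connector gadget so that it still enforces $x = x'$ but uses only clauses with three distinct variables, all the while keeping the incidence graph planar together with $\kappa$. The naive idea of padding both implication clauses with a single fresh third variable fails for two reasons: a single unconstrained dummy $y$ added positively to both clauses, as in $(\neg x \lor x' \lor y)$ and $(x \lor \neg x' \lor y)$, destroys the implication when $y = 1$; and forcing the issue with four clauses on the same three variables $\{x,x',y\}$ yields the subgraph $K_{3,4} \supset K_{3,3}$, which is not planar. I would resolve this in one of two ways. The clean self-contained option uses two private dummy variables $y,y'$ and the four clauses $(\neg x \lor x' \lor y)$, $(\neg x \lor x' \lor \neg y)$, $(x \lor \neg x' \lor y')$, $(x \lor \neg x' \lor \neg y')$; the first pair forces $x \Rightarrow x'$ and the second $x' \Rightarrow x$ irrespective of $y,y'$, and the incidence graph is the planar $K_{2,4}$ (on $x,x'$) with $y,y'$ each attached to two consecutive clauses. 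Alternatively, following Mansfield's device more literally, I would exploit the variable that is \emph{false in every satisfying assignment} supplied by his construction: padding with a guaranteed-false copy $f$ gives $(\neg x \lor x' \lor f)$ and $(x \lor \neg x' \lor f)$, whose incidence graph is the planar $K_{2,3}$ and which again enforces $x=x'$ whenever $f=0$.

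The main obstacle is the embedding, not the logic. For either gadget I must show that the extra clause segments and variable segments it occupies can be accommodated inside $R$ while the whole graph stays planar and the cyclic order of vertices along $\kappa$ is respected; as in the proof of Theorem~\ref{thm:planar_1_in_3} this is arranged by first extending the $R$-drawing with additional alternating segments (cf.\ \fig{fig_extend_r_drawing}) so that each widened connector gadget has room, and by checking the local drawing explicitly. If the false-variable variant is used, there is the additional task of making a false copy available next to \emph{every} connector gadget: I would propagate the false value along its own chain of connector gadgets, padding each link's clauses with a false copy established one step earlier, so that a straightforward induction from Mansfield's seed variable shows all copies are forced false and no regress arises. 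Finally I would observe, as in Section~\ref{sec_paths}, that the construction is satisfiability-preserving and blows the size up only quadratically, completing the reduction.
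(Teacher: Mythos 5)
Your outline coincides with the paper's strategy (reduce from Mansfield's variant, keep the $R$-drawing, redesign the connector gadget so that every clause has three distinct variables), and your resolution arguments for why each candidate gadget forces $x=x'$ are correct. The genuine gap sits exactly at the step you postpone as ``checking the local drawing explicitly'': neither of your gadgets can in fact be drawn together with $\kappa$. Planarity of the gadget's incidence graph in isolation (your $K_{2,4}$-plus-pendants and $K_{2,3}$ observations) is not the relevant test; what must be planar is the union with $\kappa$, and in the $R$-drawing the segments of $\kappa$ are barriers: every gadget edge has to be routed inside a single strip between two consecutive segments (routing ``around'' through the far side of $\kappa$ is not available, since that face carries the edges of all the other gadget chains). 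Hence a clause can only be joined to variables on the two variable segments flanking its clause segment, and within one strip the incidences must form a two-layer planar graph, which in particular contains no cycle. The consequence is that two clauses sharing two variables can only be realized as a ``diamond'': the shared variables on one variable segment and the two clauses on the two \emph{opposite} clause segments (this is precisely how the gadget of Theorem~\ref{thm:planar_1_in_3} is embedded). Your first gadget has four clauses that all contain both $x$ and $x'$. A connector must place $x$ and $x'$ on distinct variable segments; then all four clauses are forced onto the single clause segment between them, and $y$, which must reach two of those clauses from $x$'s or $x'$'s segment, is cut off by the edges of $x$ (resp.\ $x'$): wherever you place it, $y$ can reach only the topmost or the bottommost clause without a crossing. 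Your second gadget fails the same way, since $x$, $x'$ and $f$ all occur in both clauses, so any placement of $f$ closes a cycle inside a single strip; and putting $x$ and $x'$ on the same variable segment both recreates the forbidden $4$-cycle and defeats the purpose of a connector, which must transport the value across segments.

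For contrast, the paper's gadget is engineered around exactly this obstruction: it uses eight clauses, each containing exactly \emph{one} of $x,x'$ together with two of six auxiliary variables $a,b,u,a',b',u'$ that live on the variable segment strictly between $x$ and $x'$, and every pair of clauses that shares two auxiliaries is split across the two clause segments flanking that middle segment. This is why the paper's gadget has width~$4$, why extra pairs of segments are inserted into the $R$-drawing, and why the detour/parity bookkeeping of \figurename~\ref{fig_three_distinct_2} is needed at clauses whose variables all lie to the right. Your false-variable route could conceivably be repaired---pad the two implication clauses with two \emph{distinct} forced-false variables, one per clause, and grow the supply of false variables from segment to segment using clauses $(\neg g \lor a \lor b)$ with $a,b$ already false and lying on a common variable segment---but this requires its own embedding and supply analysis, which the proposal does not contain; as written, the reduction does not go through.
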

\begin{proof}
We reduce from the non-linked version of the problem by drawing an incidence graph on our grid and replace edge parts by gadgets that transport the truth settings of a variable.
We start with an $R$-drawing and extend it to a new $R$-drawing, similar to the proof of Theorem~\ref{thm:planar_1_in_3};
we add four pairs of clause and variable segments between each initial variable segment and the clause segment to its right, and three pairs of variable and clause segments between each initial clause segment and the variable segment to its right.
Thus, the number of clause segments strictly between a variable vertex $\tilde v$ and a clause vertex $\tilde c$ to its right is $4k$, for some natural number $k$, and if $\tilde c$ is to the left of $\tilde v$, then the number of clause segments strictly between them is $4k-1$.
Again, our connector gadgets have width 4, i.e., they connect two variables with two clause segments in-between them.
They are shown in \figurename~\ref{fig_three_distinct_1}.
The underlying formula of a single gadget connecting variables $x$ and $x'$ is
\begin{multline*}
(\neg x \lor a \lor u) \land (x' \lor \neg a \lor u) \land
(\neg x \lor b \lor \neg u) \land (x' \lor \neg b \lor \neg u) \land \\
(\neg x' \lor a' \lor u') \land (x \lor \neg a' \lor u') \land
(\neg x' \lor b' \lor \neg u') \land (x \lor \neg b' \lor \neg u')
\end{multline*}
By their construction, two variables $x$ and $x'$ always have the same value:
The part $(\neg x \lor a \lor u) \land (\neg a \lor u \lor x')$ entails $(\neg x \lor u \lor x')$, and $(\neg x \lor b \lor \neg u) \land (\neg b \lor \neg u \lor x')$ entails $(\neg x \lor \neg u \lor x')$.
These two clauses have $(\neg x \lor x')$ as a resolvent.
Analogously, the clauses containing $u'$ imply $x' \Rightarrow x$.

We note that in Mansfield's reduction it is no longer shown that there exists a variable cycle, so we cannot rely on a three-legged embedding.
As in the previous reduction, the connector gadgets have width 4.
This time, however, when all variable vertices of a clause vertex are to its right, one sequence of edge connectors replacing an edge of the initial incidence graph needs to take a larger detour to connect to the clause, as shown in \fig{fig_three_distinct_2}.
Doing this detour is possible as there is no vertex of the initial graph of distance 5 or less to the left of the clause vertex (recall that there are at least four clause segments strictly between a variable and a clause to its right).
As this results in a plane \textsc{Linked Planar 3-SAT} instance that is satisfiable if and only if the initial one was, the theorem follows.
\end{proof}

\begin{figure}
\centering
\includegraphics{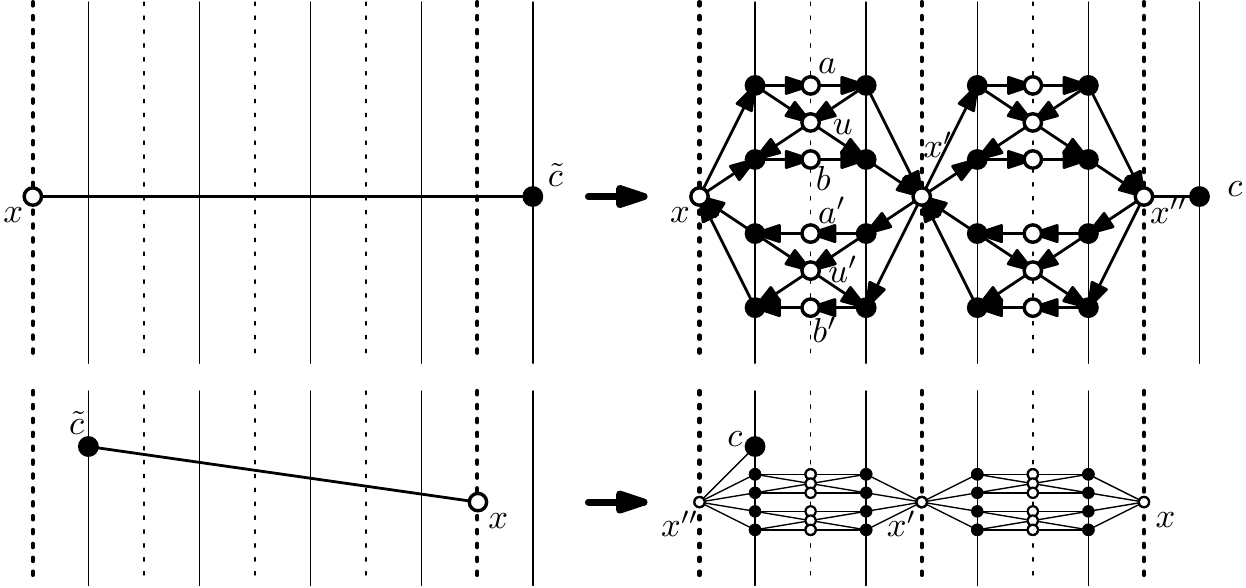}
\caption{A modified connector gadget.
A single gadget connects the variable vertices $x$ and $x'$.
(Hence, a single gadget has width~4.)
The variable $x$ has to have the same value as~$x'$ and $x''$.
All clauses have exactly three distinct variables.
Recall that an arrow from a variable to a clause indicates that the variable occurs negated;
for an arrow from the clause to the variable, the occurrence is unnegated.
(At the bottom, we omit arrows for clarity.)
}
\label{fig_three_distinct_1}
\end{figure}

\begin{figure}
\centering
\includegraphics[page=2]{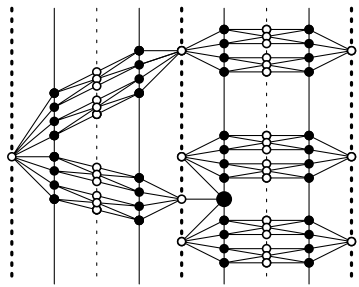}
\caption{
Since the connector gadgets have width 4, we may have to ``overshoot'' by 5 units when all variable vertices of an initial clause vertex (fat dot) are to its right.
As all vertices of the initial graph to the left of the clause vertex have distance at least 9, this construction is possible.
}
\label{fig_three_distinct_2}
\end{figure}

Note that the bi-implication implemented by the connector gadget has sixteen different truth assignments, independent of whether $x$ is true or false.
As the \textsc{Planar 3-SAT} problem is known to be \#P-complete even if each clause contains exactly three distinct variables~\cite{hunt}, we can add connector gadgets to transform any such instance into a \textsc{Linked Planar 3-SAT} instance.
For each connector gadget, the number of solutions is multiplied by~16.

\begin{theorem}
The \textsc{Linked Planar 3-SAT} problem remains \#P-complete even if each clause contains exactly three different variables.
\end{theorem}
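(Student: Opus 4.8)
The plan is to upgrade the reduction behind Theorem~\ref{thm:linked_three_distinct} to a counting reduction, this time starting from the counting version of \textsc{Planar 3-SAT} with exactly three distinct variables per clause, which is \#P-complete by~\cite{hunt}. Membership in \#P is immediate: the decision problem lies in NP (Theorem~\ref{thm:linked_planar_3_sat}), a satisfying assignment is a polynomial-size certificate verifiable in polynomial time, so counting such assignments is in \#P. It therefore remains to show that the construction preserves the number of solutions up to an efficiently computable factor.

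The heart of the argument is a local counting lemma for the connector gadget of Figure~\ref{fig_three_distinct_1}. Fix the (necessarily equal) values of its endpoint variables $x$ and $x'$ and count the extensions to the six auxiliary variables $a,b,u,a',b',u'$. I would observe that the eight clauses split into two quadruples sharing only $x$ and $x'$: the four clauses containing $u$ constrain $\{a,b,u\}$, and the four containing $u'$ constrain $\{a',b',u'\}$. Hence the number of extensions factors as a product over the two quadruples. A short case check on each quadruple, substituting $x=x'=0$ and then $x=x'=1$, shows that each quadruple reduces in both cases to a pair of clauses of the form $(\alpha \lor u)\land(\beta\lor\neg u)$ (up to negating literals), which admits exactly four assignments. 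The product is therefore $4\cdot 4 = 16$ in both cases. The decisive point is that this count is the same whether $x$ is true or false, so the factor the gadget contributes does not depend on the assignment being extended.

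With this invariance in hand, the global count assembles cleanly. The cyclic implications force all copies of a variable along each chain of connector gadgets to agree, so every satisfying assignment of the initial formula $\tilde\phi$ extends in exactly one way to consistent values on all copy variables appearing in $\phi$; moreover the surrogate clause $c$ is satisfied precisely when its original $\tilde c$ is, so the number of consistent main-variable assignments of $\phi$ equals $\#\tilde\phi$. Because distinct gadgets use disjoint sets of auxiliary variables, their $16$ local completions are chosen independently. If the construction uses $m$ connector gadgets in total, we obtain $\#\phi = 16^{m}\cdot \#\tilde\phi$. Since $m$ is polynomial in the input size and can be read off from the construction, dividing the count for $\phi$ by the computable number $16^{m}$ recovers $\#\tilde\phi$, which gives the required (weakly parsimonious) \#P-hardness reduction.

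The routine layout subtleties of Theorem~\ref{thm:linked_three_distinct} --- the parity bookkeeping for the clause and variable segments, and the ``overshoot'' detour when all variable vertices lie to the right of a clause --- play no role here, since the counting argument depends only on the combinatorics of the gadgets and not on their placement. Accordingly, the only genuine obstacle is establishing the value-independence of the per-gadget factor $16$; once that invariance is verified, the multiplicativity across gadgets and the recovery of $\#\tilde\phi$ are purely bookkeeping.
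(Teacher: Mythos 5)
Your proposal is correct and follows essentially the same route as the paper: the paper's (terse) argument is precisely that the connector gadget of Figure~\ref{fig_three_distinct_1} admits sixteen satisfying extensions independent of the value of $x$, so attaching gadgets to the \#P-complete instances of~\cite{hunt} multiplies the solution count by $16$ per gadget, yielding a weakly parsimonious reduction. Your contribution is merely to spell out the details the paper leaves implicit (the factorization into the two quadruples sharing only $x,x'$, the case check giving $4\cdot 4=16$, and the division by $16^m$), which is a faithful elaboration rather than a different approach.
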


\subsection{Monotonicity restrictions}
We can add the following list of restrictions to the setting for which the problem remains hard.
Both can be shown by reducing from \textsc{Monotone Planar 3-SAT}.

A clause is \emph{monotone} if it contains either only negated or only unnegated literals;
a formula is \emph{monotone} if all its clauses are monotone.

\begin{definition}[\textsc{Monotone Planar 3-SAT}]
Let $G_\phi = (C \cup V, E)$ be the incidence graph of a 3-SAT formula~$\phi$, where $C$ is the set of clauses and $V$ is the set of variables of~$\phi$, and each clause of $C$ is monotone.
The \textsc{Monotone Planar 3-SAT} problem asks whether $\phi$ is satisfiable.
\end{definition}

This problem has been shown to be NP-complete by de~Berg and Khosravi~\cite{de_berg};
they actually show that the problem remains hard even if there is a variable cycle separating the clauses with the negated variables from the ones with the unnegated variables.
That is, the incidence graph can be drawn in a rectilinear way with the variables on the $x$-axis such that all clauses with the negated occurrences are below the $x$-axis, and all other clauses are above the $x$-axis.
We can use their result to show hardness of the specialization of the problem to our setting.

\begin{theorem}\label{thm:monotone_linked}
The \textsc{Linked Planar 3-SAT} problem remains NP-complete even if all clauses are monotone.
\end{theorem}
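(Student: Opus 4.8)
The plan is to follow the same template as the previous reductions: start from an instance $\tilde\phi$ of \textsc{Monotone Planar 3-SAT}, which is NP-complete even in the var-linked setting by the result of de~Berg and Khosravi~\cite{de_berg}, build an $R$-drawing of its incidence graph as in Section~\ref{sec_paths}, and replace each crossing of an edge of $G_{\tilde\phi}$ with the cycle $\kappa$ by a connector gadget. The only genuinely new ingredient is the gadget itself: the connector $(\neg x \lor x') \land (\neg x' \lor x)$ used in Section~\ref{sec_paths} is \emph{not} monotone, since each of its clauses mixes a negated and an unnegated literal. I therefore need a connector gadget all of whose clauses are monotone but which still forces $x = x'$ in every satisfying assignment.

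First I would introduce two auxiliary variables $u$ and $u'$ and use the gadget
\[
(\neg x \lor \neg u) \land (x' \lor u) \land (\neg x' \lor \neg u') \land (x \lor u').
\]
Each of these four clauses is monotone (two are all-negative, two are all-positive). To see that the gadget forces $x = x'$, note that $(\neg x \lor \neg u)$ gives $x \Rightarrow \neg u$ and $(x' \lor u)$ gives $\neg u \Rightarrow x'$, so together they entail $x \Rightarrow x'$; symmetrically $(\neg x' \lor \neg u') \land (x \lor u')$ entails $x' \Rightarrow x$. Hence $x \Leftrightarrow x'$, and one checks that the auxiliary variables are then uniquely determined, so the gadget has exactly two satisfying assignments, one with $x = x' = 0$ and one with $x = x' = 1$. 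Chaining such gadgets along an edge copies a variable's value to a surrogate near the incident clause. Because every surrogate carries the same value \emph{and} occurs with the same sign as the variable it replaces, substituting surrogates into a monotone clause of $\tilde\phi$ yields a monotone clause; thus all clauses of the constructed formula are monotone.

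It remains to embed this gadget in the $R$-drawing. Its incidence graph is an $8$-cycle alternating the four variables $x, u, x', u'$ with the four clauses, hence trivially planar. As in the proofs of Theorems~\ref{thm:planar_1_in_3} and~\ref{thm:linked_three_distinct}, I would widen the $R$-drawing so that each connector has width~$4$: placing $x$ and $x'$ on the two outer variable segments, the auxiliary variables $u$ and $u'$ on the intermediate variable segment, and the two all-negative clauses on one clause segment and the two all-positive clauses on the other, the eight edges can be drawn without crossings, and $\kappa$ can be threaded along the segments exactly as in Section~\ref{sec_paths}. The case in which all three variable vertices incident to a clause of $\tilde\phi$ lie to its right is handled by letting one connector chain take a short detour using the variable segment to the left of the clause, exactly as in the earlier proofs; the extra segments inserted when widening the $R$-drawing guarantee enough room. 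The resulting instance is planar, admits $\kappa$, is monotone, and is satisfiable if and only if $\tilde\phi$ is.

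The main obstacle is the gadget design: because monotonicity rules out the binary implication clauses that make the base reduction so simple, one is forced to introduce auxiliary variables and hence a wider, more carefully routed connector. This is consistent with the later Theorem~\ref{thm:three_distinct_satisfiable}, which shows that \emph{exactly}-three-distinct monotone planar instances are always satisfiable: the monotone connector above crucially relies on $2$-clauses and cannot be realized with three distinct variables per clause. Verifying that the wider gadget, together with the auxiliary variables on the intermediate segments, still embeds planarly and remains compatible with $\kappa$ is the delicate part of the argument.
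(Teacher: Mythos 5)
Your proposal follows the paper's proof almost exactly: the paper also reduces from \textsc{Monotone Planar 3-SAT}, widens the $R$-drawing so that width-4 connector gadgets fit (four pairs of segments after each variable segment, three after each clause segment), and handles a clause with all its variables to the right by the same detour as in Theorem~\ref{thm:linked_three_distinct}. Even your gadget is only cosmetically different: the paper uses three variables $x$, $\overline{x}$, $x'$ with monotone clauses $(x \lor \overline{x})$, $(\neg x \lor \neg \overline{x})$, $(\overline{x} \lor x')$, $(\neg \overline{x} \lor \neg x')$, forcing $x \neq \overline{x} \neq x'$ and hence $x = x'$. In your gadget the auxiliaries are forced to satisfy $u = \neg x$ and $u' = \neg x'$ in every satisfying assignment, so you have in effect split the paper's middle negation variable $\overline{x}$ into two copies, turning its two 4-cycles sharing a vertex into an 8-cycle. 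Both gadgets are correct, monotone, parsimonious, and of width~4.

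One detail of your embedding is wrong, though easily repaired. You place the two all-negative clauses on one clause segment and the two all-positive clauses on the other. But a clause must lie on the clause segment between the variable segments of its own two variables: $(\neg x \lor \neg u)$ and $(x \lor u')$ both join $x$ (an outer segment) to an auxiliary (the middle segment), so they must share the clause segment adjacent to $x$, and likewise $(x' \lor u)$ and $(\neg x' \lor \neg u')$ must share the segment adjacent to $x'$. With the placement as you state it, the edge from $(\neg x' \lor \neg u')$ to $x'$ would have to cross the middle variable segment and the far clause segment, i.e., cross $\kappa$. So each clause segment must host one positive and one negative clause; with that corrected placement the 8-cycle embeds planarly exactly as you intend, and the rest of your argument goes through.
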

\begin{proof}
We reduce from \textsc{Monotone Planar 3-SAT}.
The construction is the same as in the previous reductions, except for the connector gadget.
Again, we have a connector gadget of width 4 (described below).
Therefore, we start by an $R$-drawing and place four pairs of clause and variable segments between each variable segment and the clause segment to its right, as well as three pairs of variable and clause segments between each clause segment and the variable segment to its right (as in the reduction of Theorem~\ref{thm:linked_three_distinct}).

The connector gadget, shown in \fig{fig_monotone_connector}, consists of three variables $x$, $\overline{x}$, and $x'$.
We add two clauses $(x \lor \overline{x})$ and $(\neg x \lor \neg \overline{x})$, which entail $x \neq \overline{x}$.
Another pair of clauses, $(\overline{x} \lor x')$ and $(\neg \overline{x} \lor \neg x)$, imply $\overline{x} \neq x'$, and we thus have $x = x'$.
Again, if $x$ is to the left of a clause $\tilde c$ in the $R$-drawing, the number of clause segments strictly between $x$ and $\tilde c$ is a multiple of four, and so is the number of variable segments;
hence, we can place the gadgets as on the top of \fig{fig_monotone_connector}.
If $x$ is to the right of $\tilde c$, we place the connector gadget at $\tilde c$ as shown at the bottom of \fig{fig_monotone_connector}.

If all three variables of a clause are to its right, we take a detour in the same way as in the proof of Theorem~\ref{thm:linked_three_distinct};
see \figurename~\ref{fig_monotone_connector_2}.
\end{proof}

\begin{figure}
\centering
\includegraphics{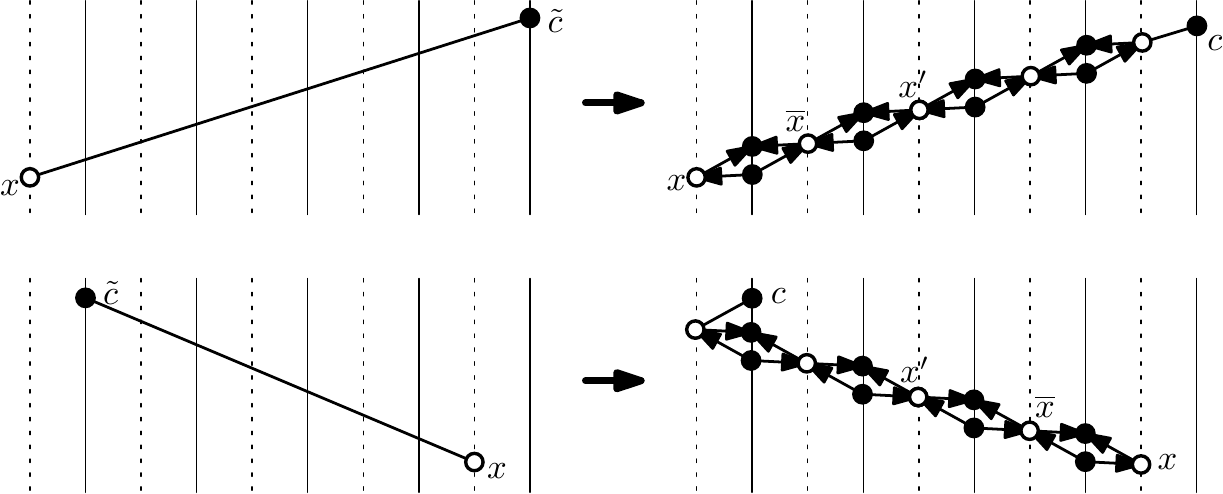}
\caption{A variant of the connector gadget in which all clauses are monotone.
We have $x \neq \overline{x} \neq x' = x$ to the right.
}
\label{fig_monotone_connector}
\end{figure}

\begin{figure}
\centering
\includegraphics[page=2]{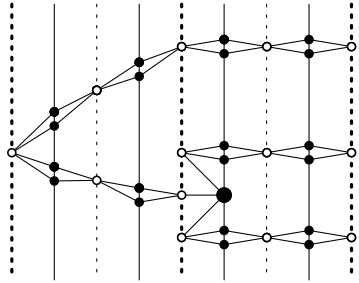}
\caption{As in the previous reductions, we use the above construction if all three variables of a clause are to its right.
}
\label{fig_monotone_connector_2}
\end{figure}

We note that the connector gadget of the previous reduction can be further split to show NP-hardness for instances in which a variable occurs only three times.
To refrain from being too repetitive, we only give a sketch of the construction in \fig{fig_monotone_three_occurrences}.
The variables $x$, $x'$, and $x''$ can be shown to have the same value by applying resolution:
for example, the clauses $(\neg x \lor \neg \overline{x}') \land (\overline{x}' \lor \overline{\overline{x}}') \land (\neg \overline{\overline{x}}' \lor \neg \overline{\overline{\overline{x}}}') \land (\overline{\overline{\overline{x}}}' \lor x')$ can be resolved to $(\neg x \lor x')$.
The remaining clauses of the cycle imply $(\neg x' \lor x)$.
In general, every second variable in the cycle will have the same value;
in the drawing, we have $x = x' = x''$.

\begin{figure}
\centering
\includegraphics[page=3]{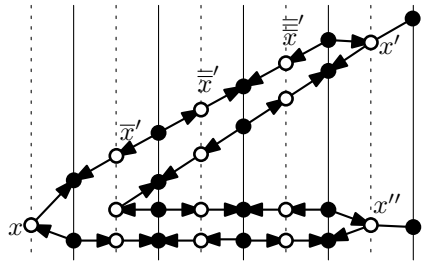}
\caption{The gadget shown in \fig{fig_monotone_connector} can be further split to have each variable vertex of degree at most three.}
\label{fig_monotone_three_occurrences}
\end{figure}

Let us now consider a variant that is of particular interest for the motivation of platform games mentioned in the introduction.
Lichtenstein already showed that \textsc{Planar 3-SAT} remains NP-complete even if we require that the variable cycle partitions the edges of every variable vertex into those leading to a negated occurrence and those leading to an unnegated one~\cite[Lemma~1]{lichtenstein}.
(As he mentions, this also implies that one could split a variable vertex into two literal vertices while preserving planarity.)
Note that there is a subtle difference to the restriction of de~Berg and Khosravi~\cite{de_berg} to \textsc{Monotone Planar 3-SAT}, as the side of the variable cycle to which the edges to the negated occurrences emanate is not fixed globally.
The following set of related restrictions could be particularly interesting for further reductions.

\begin{theorem}\label{thm_side}
The \textsc{Linked Planar 3-SAT} problem remains NP-complete even if, for each clause, the edges corresponding to positive occurrences emanate to the interior of the cycle~$\kappa$, and the ones to negated occurrences to the exterior.
In addition, each variable occurs in at most three clauses.
\end{theorem}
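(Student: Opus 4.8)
The plan is to reduce from \textsc{Monotone Planar 3-SAT}, using the strengthened form of de~Berg and Khosravi~\cite{de_berg} in which there is a variable cycle with all clauses containing only unnegated variables on one side and all clauses containing only negated variables on the other. The guiding observation is that if the \emph{entire} constructed instance is monotone, then the required side condition becomes almost free: a clause that is entirely positive has only positive occurrences, so ``positive occurrences to the interior'' merely asks that all of its edges leave to the interior, and symmetrically an entirely negative clause must have all of its edges on the exterior. I would therefore fix the identification ``interior of $\kappa$ = the side carrying the positive clauses'' and ``exterior = the side carrying the negative clauses'', so that the global separation already present in the de~Berg--Khosravi drawing is exactly what the theorem demands. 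First I would produce the $R$-drawing and the curve carrying $\kappa$ exactly as in Section~\ref{sec_paths}, and then replace each crossing of an edge of the initial incidence graph with $\kappa$ by a connector gadget, precisely as in the proof of Theorem~\ref{thm:monotone_linked}.

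The connector gadget I would use is the monotone one of \fig{fig_monotone_connector} together with its splitting in \fig{fig_monotone_three_occurrences}. Both are attractive here for two independent reasons. First, every clause of these gadgets is itself monotone, so the combined formula---monotone main clauses from the source together with monotone connector clauses---is monotone throughout, which is what makes the per-clause side condition automatic. Second, the split version of \fig{fig_monotone_three_occurrences} already guarantees that every variable vertex has degree at most three, which is precisely the additional occurrence bound demanded by the statement; I would only have to check that the vertices at which a main clause attaches carry at most two connector edges, leaving room for the single main-clause edge. Placing the positive clauses of each connector gadget on the interior side and the negative clauses on the exterior side, consistent with the global identification above, keeps the side condition satisfied locally inside every gadget.

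The step I expect to be the real obstacle is verifying the side condition \emph{globally and without crossings} after the spine has been reorganised so that $\kappa$ visits all clauses before all variables. In the de~Berg--Khosravi drawing the separation positive-above / negative-below is with respect to the variable line, whereas here it must hold with respect to interior/exterior of a cycle that lists clauses first; I would need to argue that the connector replacements transport each clause to the correct page and that, on each page separately, the resulting arcs do not interleave. The most delicate case is the ``detour'' situation in which all three variables of a main clause lie to one side of it along $\kappa$, handled for the non-side-constrained reduction in \fig{fig_monotone_connector_2}; here I would have to route the overshooting connector chain so that its positive clauses stay interior and its negative clauses stay exterior, using the extra clause and variable segments inserted into the initial $R$-drawing (four between each variable segment and the clause to its right, three between a clause and the variable to its right, as in Theorem~\ref{thm:linked_three_distinct}) to give the detour enough room. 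Once these routings are checked, planarity of the union with $\kappa$ follows as in the earlier reductions, and satisfiability is preserved because each connector still forces $x = x'$; membership in NP being immediate, NP-completeness follows.
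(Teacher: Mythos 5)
Your plan has a fatal flaw, and it is one the paper itself points out in the remark immediately following this theorem: the combination you are aiming for---\emph{every} clause monotone \emph{and} the side condition relative to~$\kappa$---cannot be the basis of an NP-hardness proof, because all such instances are solvable in polynomial time. If every clause is monotone and its positive edges go to the interior while negative edges go to the exterior, then every clause vertex has \emph{all} of its incident edges on one side of~$\kappa$. Since the clause vertices appear consecutively along~$\kappa$, after deleting the edges of~$\kappa$ the ``free'' side of each clause vertex merges with the faces along the arcs of~$\kappa$ between consecutive clauses, so all clause vertices end up incident to one common face; the graph is then 2-outerplanar, and such instances are in~P (as recalled in the introduction, via~\cite{co_nested} or bounded treewidth). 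A correct reduction of the kind you propose would therefore prove P $=$ NP. The concrete place where your construction breaks is the embedding step you yourself flagged as ``the real obstacle'': in any drawing, the regions adjacent to a clause vertex alternate between interior and exterior of~$\kappa$ as one crosses the curve, so a connector clause such as $(x \lor \overline{x})$ in \fig{fig_monotone_connector}, which must reach variables lying on opposite sides of it along the spine, necessarily sends its two (both positive) edges to opposite sides of~$\kappa$; rerouting is impossible without crossing~$\kappa$. So this is not a delicate verification that remains to be done---it is an impossibility, and no choice of monotone gadget can avoid it.

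The paper's proof escapes exactly by giving up monotonicity of the constructed formula (which the theorem statement does not require; it only constrains on which side each kind of occurrence is drawn). Starting from the three-legged embedding of the de~Berg--Khosravi instance~\cite{de_berg}---not from an $R$-drawing---variable values are transported by cycles of two-literal implication clauses $(\neg x_i \lor x_{i+1})$, each of which has its negated edge on one side and its positive edge on the other, which is precisely the pattern the side condition permits; the top half of each cycle carries the complemented value, so an original positive clause $(x \lor y \lor z)$ becomes the \emph{mixed} clause $(\neg \overline{x_i} \lor \neg \overline{y_j} \lor z_k)$ with its two negated edges to the exterior and its single positive edge to the interior (\fig{fig_monotone}). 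The occurrence bound of three then comes for free, since each cycle variable is in two implication clauses plus at most one main clause, rather than from a separate degree-splitting gadget.
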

\begin{proof}
We reduce from the \textsc{Monotone Planar 3-SAT} variant of de~Berg and Khosravi~\cite{de_berg}, using a three-legged embedding.
The construction (see \figurename~\ref{fig_monotone}) uses cycles consisting of the variables $x_1, \dots x_k$ and $\overline{x_1}, \dots, \overline{x_k}$ and clauses
$(\neg x_i \lor x_{i+1})$ and $(\neg \overline{x_i} \lor \overline{x_{i+1}})$ for $i \in \{1, \dots, k-1\}$,
as well as the clauses $(x_1 \lor \overline{x_1})$ and $(\neg x_k \lor \neg \overline{x_k})$.
The latter thus entails $(\neg x_1 \lor \neg \overline{x_1})$.
We therefore have $x_i = x_j$, $\overline{x_i} = \overline{x_j}$, and $x_i \neq \overline{x_j}$ for $i,j \leq k$.
The variable vertices are placed from left to right with increasing indices.
Thus, the variables on the upper part of the cycle have the opposite value of those on the lower part.
For a clause with positive literals in the \textsc{Monotone Planar 3-SAT} instance, we connect the variables as in \figurename~\ref{fig_monotone}.

We therefore have variable gadgets that consist of mentioned cycles; see the three cycles at the bottom of \fig{fig_monotone}.
These variable gadgets are connected to the clause $c$ (which represents the clause of the initial formula) either directly (for the middle variable gadget), or via another cycle (for the left and the right variable gadget).
For these connector gadgets, we again have that the variables on the upper part have the opposite value as those of the lower part; in the way a variable gadget and a connector gadget are connected, a variable on the upper part of the variable gadget has the same value as one on the lower part of the connector gadget.
The clause $c$ has two edges to the left (which therefore correspond to negated occurrences) and one to the right (an unnegated occurrence).
In the setting shown in \fig{fig_monotone}, we consider a variable set to true if the variables on the lower part of the variable gadget are true.
A clause $\tilde c = (x \lor y \lor z)$ is thus transformed to $c = (\neg \overline{x_i} \lor \neg \overline{y_j} \lor z_k)$.
We can have variable gadgets of arbitrary width simply by having larger cycles, and connect the clauses according to the three-legged embedding.
The resulting construction is thus crossing-free, and the formula is satisfiable if and only if the initial one is.
\end{proof}

\begin{figure}
\centering
\includegraphics{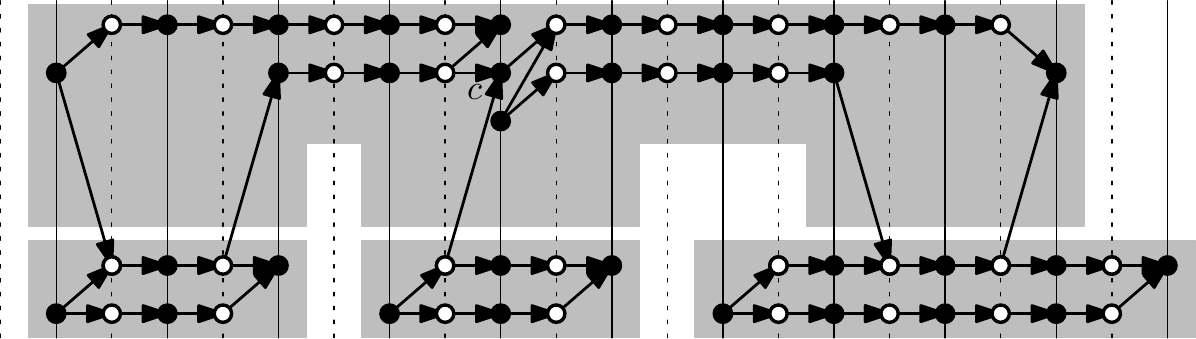}
\caption{All clauses have negative literals to the left and positive literals to the right.
The variable state is transported by cycles where the variables on top have the negative value of the variables at the bottom.
A clause $\tilde c = (x \lor y \lor z)$ is transformed to $c = (\neg \overline{x_i} \lor \neg \overline{y_j} \lor z_k)$.
The three-legged embedding of the clause is indicated by the gray contour.
}
\label{fig_monotone}
\end{figure}

Let us point out that we cannot strengthen the theorem by requiring that all clauses are monotone.
In that case, the clause vertices would be incident to the same face after the removal of $\kappa$.
The incidence graph then would be 2-outerplanar.
As discussed in the introduction, such instances can be solved in polynomial time.

Observe that the connector gadgets in \figurename~\ref{fig_monotone} use only clauses with two variables.
The reduction therefore also works for \textsc{Planar 2-SAT} instances.
While \textsc{Planar 2-SAT} can be solved in polynomial time, it is known to be \#P-complete~\cite{vadhan}.
As our gadgets are parsimonious (i.e., do not change the number of solutions), they can be applied to show \#P-completeness.

\begin{theorem}
The \textsc{Linked Planar 2-SAT} problem is \#P-complete.
It remains \#P-complete even if for each clause, the edges corresponding to positive occurrences emanate to the interior of the cycle~$\kappa$, and the ones to negative literals to the exterior.
\end{theorem}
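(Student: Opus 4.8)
The plan is to give two parsimonious reductions from \#\textsc{Planar 2-SAT}, which is \#P-complete by Vadhan~\cite{vadhan}, and to observe that everything built earlier in the paper already has the two features we need: the connector gadgets use only two-variable clauses, and they preserve the number of satisfying assignments. Membership in \#P is immediate, since a truth assignment of a 2-SAT formula can be checked in polynomial time, so the whole argument is about hardness together with parsimony. Note also that the side-restricted statement is the stronger one: once it is established, the unrestricted \#P-completeness follows because the side-restricted instances form a subclass of all \textsc{Linked Planar 2-SAT} instances.

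For the unrestricted claim I would reuse the basic construction of Section~\ref{sec_paths} verbatim, feeding it a \textsc{Planar 2-SAT} instance $\tilde\phi$ instead of a 3-SAT instance. That construction builds an $R$-drawing and replaces every crossing of an edge with $\kappa$ by connector gadgets of the form $(\neg x \lor x') \land (\neg x' \lor x)$, which are already two-clauses; the only remaining clauses are the untouched clauses of $\tilde\phi$, which are two-clauses by assumption. Hence the output is a \textsc{Linked Planar 2-SAT} instance. As recorded after Theorem~\ref{thm:linked_planar_3_sat}, this reduction is parsimonious: each connector forces $x=x'$, so for every value of the incoming variable there is exactly one consistent extension to the gadget's internal variables, and the induced map on satisfying assignments is a bijection.

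For the side-restricted claim I would instead invoke the construction of Theorem~\ref{thm_side}, whose variable and connector gadgets (\figurename~\ref{fig_monotone}) consist solely of two-variable clauses and already realise the side restriction, with positive occurrences leaving to the interior of $\kappa$ and negated occurrences to the exterior. Applying it to a planar 2-SAT instance and routing each of the (two-literal) clauses into the clause vertex $c$ of that construction produces only two-clauses throughout, so the result is again a \textsc{Linked Planar 2-SAT} instance respecting the side restriction. The crucial point for counting is that each variable gadget — the cycle on $x_1,\dots,x_k$ and $\overline{x_1},\dots,\overline{x_k}$ satisfying $x_i=x_j$, $\overline{x_i}=\overline{x_j}$, and $x_i\neq\overline{x_j}$ — admits exactly two global states, one per truth value of the represented variable. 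Thus every satisfying assignment of the source extends in exactly one way, the reduction is parsimonious, and \#P-completeness of the side-restricted variant follows from Vadhan's theorem.

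The hard part will be the bookkeeping for the side restriction in the presence of \emph{mixed} clauses. Unlike Theorem~\ref{thm_side}, whose source is \textsc{Monotone Planar 3-SAT}, a general 2-SAT clause can pair a positive with a negated literal, so I must route one incident edge to a positive port (interior) and the other to a negated port (exterior) of the respective variable gadgets while keeping the drawing planar, and I must make sure such a \#P-complete source can be equipped with the three-legged layout that construction relies on (if necessary, first converting Vadhan's instances parsimoniously to a var-linked form). Since each variable gadget exposes both a positive and a negated copy on the two sides of $\kappa$, the required ports are always present, and I would argue crossing-freeness exactly as in the three-legged argument of Theorem~\ref{thm_side}; confirming parsimony is then routine once the ``exactly two states per variable gadget'' property is verified by the same resolution computation used in the preceding monotone reductions.
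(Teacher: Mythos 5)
Your proposal is correct and takes essentially the same route as the paper's own proof: reduce from Vadhan's \#P-complete \textsc{Planar 2-SAT} via the two-variable, parsimonious gadgets of Theorem~\ref{thm_side}, whose variable and connector cycles each admit exactly two satisfying states. The one substantive remark is that the ``hard part'' you identify (mixed clauses) can be avoided entirely, because Vadhan's hardness already holds for \emph{monotone} planar 2-SAT (it encodes counting vertex covers), so the source can be taken monotone exactly as in Theorem~\ref{thm_side}---and, as you note yourself, your separate Section~\ref{sec_paths} reduction for the unrestricted claim is subsumed by the side-restricted one.
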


In order to explore the boundaries of NP-hardness of satisfiability problems, not only restrictions on the number of occurrences per clause have been considered, but also on the number of times a variable occurs (see the survey~\cite{tippenhauer} for an overview).
In this connection, there are interesting effects when requiring exactly three different variables per clause.
Indeed, an NP-hardness reduction for the case where each variable occurs in at most three clauses requires that there are clauses with at most two (different) literals, as every CNF formula with exactly three literals per clause and at most three occurrences per variable is satisfiable~\cite{tovey}.
Darmann, D\"ocker, and Dorn~\cite{darmann_planar,darmann_journal} showed how to reduce the number of times a variable occurs.
For the variant of \textsc{Monotone Planar 3-SAT} that requires exactly three different variables per clause, the complexity was previously unknown~\cite{darmann_planar,darmann_journal}.
Surprisingly, it turns out that such instances are always satisfiable, as discussed in Section~\ref{sec_forcing_sat}.

\subsection{Remark: different cycles through clauses and~variables}\label{sec:different_cycles}
Observe that, for all our constructions, $G_\phi$ still allows for adding a variable cycle $H$, as well as a clause cycle $H'$, as shown in \figurename~\ref{fig_grid_region_2}.
But these cycles will, in general, cross mutually.
Also, they will cross the cycle~$\kappa$.
Recall that if $H$ did not cross $H'$ or $\kappa_C$, the problem would be solvable in polynomial time~\cite{demaine_lecture}.

\begin{figure}
\centering
\includegraphics[page=2]{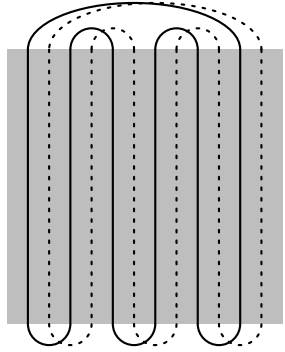}
\caption{A construction similar to the one in \fig{fig_grid_region} can be used to show that all the incidence graphs we obtain can be augmented by either a clause cycle or a variable cycle (and not just two paths obtained from~$\kappa$).}
\label{fig_grid_region_2}
\end{figure}

Using the gadgets and the two cycles shown in \figurename~\ref{fig_grid_region_2}, we observe that \textsc{Planar 3-SAT} remains NP-complete even if these cycles exist for all variants mentioned.
While the variable and clause cycles have been identified in~\cite{lichtenstein} and~\cite{clause_cycle}, respectively, it seems to have been unknown for the variants using exactly three variables per clause (even though Mansfield's construction~\cite{mansfield} can be embedded %
to obtain the cycles).
Also, it seems that clause cycles have not been considered for \textsc{Monotone Planar 3-SAT}.
Chaplick et al.~\cite{few_lines} showed hardness for \textsc{Planar Positive 1-in-3-SAT} with a clause cycle.

\section{Properties forcing satisfiability}\label{sec_forcing_sat}
Planarity is a rather drastic combinatorial restriction on the structure of a graph.
While NP-completeness of 3-SAT is preserved in the planar setting, further properties may lead not only to polynomial-time algorithms (as for \textsc{Planar NAE-SAT}~\cite{naesat}), but also to instances that are always satisfiable.
The following statement does not only hold for \textsc{Planar 3-SAT}, but for an arbitrary number of variables per clause.

\begin{theorem}\label{thm:three_distinct_satisfiable}
Every instance of \textsc{Planar SAT} in which each clause has at least three negated or at least three unnegated occurrences of distinct variables is satisfiable.
A satisfying assignment can be found in quadratic time.
\end{theorem}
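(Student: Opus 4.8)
The plan is to first reduce to a clean monotone core and then force satisfiability through the sparsity of planar bipartite graphs. First I would preprocess: for every clause, pick three distinct variables that all occur negated or all occur unnegated (such a triple exists by hypothesis) and discard the remaining literals of that clause; call the clause \emph{positive} or \emph{negative} according to the sign of the kept triple. Any assignment satisfying the reduced formula also satisfies the original one, since satisfying one literal of the kept triple satisfies the whole clause, and the restricted incidence graph is a subgraph of a planar graph, hence planar. Thus I may assume every clause is monotone with exactly three distinct variables. In this language a satisfying assignment is a set $T \subseteq V$ of true variables such that $T$ meets the triple of every positive clause while $V \setminus T$ meets the triple of every negative clause. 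Note the two trivial endpoints: $T = V$ satisfies all positive clauses and $T = \emptyset$ all negative ones, so the entire difficulty is to interpolate between them.

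Next I would record the structural sparsity that planarity provides. The incidence graph $G_\phi$ is bipartite and planar, and after the reduction every clause vertex has degree exactly three; writing $m = |C|$ and $n = |V|$, the bipartite bound $|E| \le 2(m+n) - 4$ together with $|E| = 3m$ yields $m \le 2n - 4$. The first, easy use of this is a reduction rule: if some variable $v$ occurs in positive clauses only, set $v$ true, delete all (now satisfied) clauses containing it, and recurse on the smaller instance, which is still planar and monotone with triples of size three; symmetrically for a variable occurring only in negative clauses. Hence I may further assume that every variable occurs in at least one positive and at least one negative clause, so that $G_\phi$ has minimum degree at least two.

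The core step is to produce the interpolating set. I would take $F := V \setminus T$ to be an inclusion-minimal transversal of the negative triples (start from all negatively occurring variables and greedily drop redundant ones); then every negative clause is automatically satisfied, and it remains to guarantee that no positive triple lies entirely in $F$. If a positive clause $P = \{a,b,c\} \subseteq F$ is violated, minimality of $F$ makes each of $a,b,c$ \emph{critical}: each is the unique member of $F$ in some negative clause $N_a, N_b, N_c$. I would then repair $P$ by an alternating move -- drop $a$ from $F$ to rescue $P$, re-cover $N_a$ by promoting another of its variables into $F$, and propagate along the resulting chain -- so the whole argument reduces to showing that such a repair always terminates successfully, equivalently that a minimal counterexample cannot exist. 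This is where planarity is indispensable: the critical subgraph formed by the violated positive clauses together with their private negative clauses is again bipartite planar with all clause-degrees three, so it obeys $m' \le 2n' - 4$, and the crux is a discharging computation showing that the incidences forced around the critical configuration violate this bound, a contradiction. I expect this counting lemma to be the main obstacle, since everything else is bookkeeping. Finally, each reduction or repair removes at least one clause or strictly decreases $|F|$, bounding the number of steps by $O(m)$ at linear cost each, which gives the claimed quadratic running time.
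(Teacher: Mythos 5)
Your proposal has a genuine gap, and it sits exactly where you say you expect ``the main obstacle'': the claim that a minimal counterexample to the repair process violates the planar sparsity bound is never proven, and it is the entire content of the theorem. Everything before it (reducing each clause to a monotone triple, the bound $m \le 2n-4$, eliminating variables that occur with only one sign) is routine and correct, but the core step is only a plan. Worse, the plan as stated cannot work with the tools you invoke: the critical configuration you describe --- a violated positive triple $P=\{a,b,c\}\subseteq F$ together with private negative clauses $N_a,N_b,N_c$ --- has $m'=4$ clauses, at most $9$ (and at least $5$) variables, and $12$ incidences, which comfortably satisfies $|E'| \le 2(m'+n')-4$; no local counting around the critical configuration produces a contradiction. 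Any successful version of your argument would have to be global, and there is reason to believe mere hereditary sparsity is not enough: the Euler-formula counting that the paper itself uses for the case of at least \emph{four} distinct variables per clause (via Hall's theorem, Lemma~\ref{lem_matching}) breaks down at three variables per clause, where $|E'|=3|C'|$ only yields $|C'| \le 2|B'|-2-2k$, far too weak for a system of distinct representatives. Moreover, your termination argument for the ``alternating move'' (drop $a$, promote a variable of $N_a$, propagate) is absent; unlike augmenting paths for matchings, such transversal-exchange chains need not terminate or can cycle, and you give no potential function ruling this out.

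The paper's actual proof takes a much shorter route that uses planarity topologically rather than through edge counts: for each clause, pick a monotone triple of its variables and connect them by a $3$-cycle drawn in a neighborhood of the clause's edges (a Y-$\Delta$ transform), delete the clause vertices, and observe the resulting graph on the variables is still planar. By the Four-Color Theorem, $4$-color the variables and set two color classes true and two false; every added triangle then receives three distinct colors, hence contains both a true and a false variable, so every clause --- positive or negative --- is satisfied (this even gives NAE-satisfiability of the chosen triples). The quadratic running time is that of the quadratic-time four-coloring algorithm. The contrast is instructive: the fact that the author resorts to the Four-Color Theorem here, while using elementary Hall-type counting only when clauses have at least four variables, strongly suggests that the elementary discharging lemma you hope for is either false or at least well beyond ``bookkeeping.''
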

\begin{proof}
Consider any plane drawing of the incidence graph of the formula.
For each clause vertex, we can add a 3-cycle consisting of three of its variable vertices that are either all negated or all unnegated,  as shown in \figurename~\ref{fig_monotone_distinct}.
(For clauses with three variables this is similar to a Y-$\Delta$ transform, a common operation to replace a vertex of degree 3 by a 3-cycle:
connect two variables by a curve in a neighborhood of the two edges connecting them to the clause.)
Observe that, after removing the clause vertices, the resulting graph is still planar.
It is thus 4-colorable~\cite{fourcolor} and we may consider any 4-coloring of the variables.
Set the variables of the vertices with colors $1$ and $2$ to true, and the others to false.
A 3-cycle contains three different colors, and thus a 3-cycle through tree (monotone) variable vertices has at least one variable set to true and one variable set to false.
\end{proof}

\begin{figure}
\centering
\includegraphics{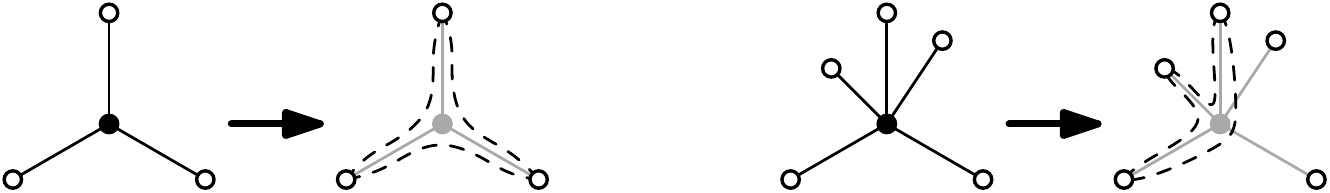}
\caption{Clause (black) with at least three distinct variables (white).
We can augment the graph with a 3-cycle through variables (dashed) and remove the clauses while preserving planarity.
In any 4-coloring of the graph, the three variables belong to three different color classes and thus there always exists a satisfying truth assignment by setting the variables of two color classes to true.}
\label{fig_monotone_distinct}
\end{figure}

\begin{corollary}
Every instance of \textsc{Monotone Planar SAT} with at least three distinct variables per clause is satisfiable.
\end{corollary}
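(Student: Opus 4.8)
The plan is to derive this as an immediate special case of Theorem~\ref{thm:three_distinct_satisfiable}. The key observation is that monotonicity collapses the disjunctive hypothesis of that theorem into a single condition that is satisfied automatically. First I would recall that in a \emph{monotone} formula every clause is either \emph{positive}, containing only unnegated literals, or \emph{negative}, containing only negated literals; no clause mixes the two signs, so there is no interaction between the negated and the unnegated occurrences to analyze.

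The main step is the per-clause hypothesis check. Fix a clause $c$ with at least three distinct variables. If $c$ is positive, then all of its (at least three) distinct occurrences are unnegated, so $c$ has at least three unnegated occurrences of distinct variables. If $c$ is negative, then symmetrically $c$ has at least three negated occurrences of distinct variables. In either case $c$ meets the precondition ``at least three negated or at least three unnegated occurrences of distinct variables'' required by Theorem~\ref{thm:three_distinct_satisfiable}. Since this holds for every clause, the whole instance satisfies that theorem's hypothesis, and is therefore satisfiable; the quadratic-time bound on finding a satisfying assignment is inherited as well.

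I do not expect a genuine obstacle here: the mathematical content lives entirely in Theorem~\ref{thm:three_distinct_satisfiable}, whose proof already supplies the $3$-cycle augmentation, the planarity argument after removing clause vertices, and the four-coloring. The only thing the corollary adds is the remark that monotonicity makes the case distinction on each clause trivial, so that ``three distinct variables'' in a monotone clause is literally the same as ``three distinct same-sign occurrences''. Consequently the body of the proof reduces to a single sentence invoking the theorem, and the task is one of exposition rather than of overcoming a technical difficulty.
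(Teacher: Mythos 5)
Your proof is correct and matches the paper exactly: the paper states this as an immediate corollary of Theorem~\ref{thm:three_distinct_satisfiable} with no separate proof, relying on precisely the observation you make, namely that a monotone clause with at least three distinct variables automatically has at least three same-sign occurrences of distinct variables.
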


\begin{corollary}\label{cor_five_var}
Every instance of \textsc{Planar SAT} with at least five distinct variables per clause is satisfiable.
\end{corollary}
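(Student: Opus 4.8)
The plan is to reduce this statement directly to Theorem~\ref{thm:three_distinct_satisfiable} by a short counting argument, so that no new drawing or coloring is needed. That theorem already guarantees satisfiability of a \textsc{Planar SAT} instance whenever every clause contains at least three negated occurrences of distinct variables \emph{or} at least three unnegated occurrences of distinct variables. Hence it suffices to verify that the hypothesis ``at least five distinct variables per clause'' forces one of these two conditions to hold in each individual clause; the incidence graph is unchanged, so its planarity is inherited and the theorem applies to the very same instance.

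First I would fix an arbitrary clause and examine its distinct variables. If some variable occurs both negated and unnegated in that clause, the clause is a tautology and is satisfied by every assignment, so I may assume that each of its (at least five) distinct variables contributes a literal of a single fixed sign. A pigeonhole argument then finishes the point: among five literals, each either positive or negative, at least $\lceil 5/2 \rceil = 3$ must share the same sign. Consequently every clause has either three distinct unnegated variables or three distinct negated variables, which is precisely the hypothesis required by Theorem~\ref{thm:three_distinct_satisfiable}.

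Invoking that theorem on the given instance then produces a satisfying assignment, which completes the proof. I expect essentially no obstacle here: the entire content is the observation that five distinct variables force three of a common sign, together with the one-line treatment of the degenerate tautological clause. (More generally the same reasoning shows that ``at least $2t-1$ distinct variables per clause'' reduces to ``at least $t$ distinct variables of a common sign,'' but only the case $t = 3$ is needed for the stated corollary.)
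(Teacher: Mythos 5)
Your proof is correct and matches the paper's intended argument: the paper states this as an immediate corollary of Theorem~\ref{thm:three_distinct_satisfiable}, with exactly the pigeonhole observation that five distinct variables force at least three occurrences of a common sign. Your separate treatment of tautological clauses is not even needed---if $P$ and $N$ denote the numbers of distinct variables with a positive, respectively negative, occurrence in a clause, then $P+N\geq 5$ already gives $\max(P,N)\geq 3$ regardless of double-signed variables---but including it does no harm.
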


For at least four variables per clause, we can give a similar result, closing the gap between Corollary~\ref{cor_five_var} and Mansfield's result~\cite{mansfield}, using less heavy machinery than the Four-Color theorem.
The proof makes use of Hall's theorem~\cite{hall}, inspired by a technique by Tovey~\cite{tovey}: each subset of $k$ clauses has at least $k$ variables occurring in it.

\begin{lemma}\label{lem_matching}
Let $G = (B \cup C, E)$ be a bipartite planar graph with parts $B$ and $C$ such that all vertices in $C$ have degree at least four.
Then there exists a matching covering every vertex of $C$.
\end{lemma}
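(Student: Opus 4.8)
The plan is to verify Hall's condition for the part $C$ and then invoke Hall's theorem~\cite{hall}. Recall that a bipartite graph $G = (B \cup C, E)$ admits a matching saturating $C$ if and only if $|N(S)| \geq |S|$ for every subset $S \subseteq C$, where $N(S) \subseteq B$ denotes the set of neighbors of $S$. Thus it suffices to establish this inequality for an arbitrary $S \subseteq C$; the empty set is trivial, so I may assume $S$ is nonempty.

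Fix such an $S$ and set $T = N(S)$. Since every vertex of $C$ has degree at least four, and a single vertex of $S$ already has at least four distinct neighbors in $T$, we have $|T| \geq 4$. Consider the subgraph $H$ induced by $S \cup T$. As a subgraph of a simple bipartite planar graph, $H$ is itself simple, bipartite, and planar. The key observation is that every vertex of $S$ has all of its (at least four) neighbors inside $T = N(S)$, so each such vertex still has degree at least four in $H$; counting the edges of $H$ by their endpoints in $S$ shows that $H$ has at least $4|S|$ edges.

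On the other hand, the standard Euler-formula argument for simple bipartite planar graphs (every face is bounded by at least four edges, so $2m \geq 4f$, which combined with $n - m + f = 2$ gives $m \leq 2n - 4$) bounds the number of edges of any simple bipartite planar graph on $n \geq 3$ vertices. Since $|S| \geq 1$ and $|T| \geq 4$, we have $|S| + |T| \geq 5 \geq 3$, so $H$ has at most $2(|S| + |T|) - 4$ edges. Combining the two estimates gives $4|S| \leq 2(|S| + |T|) - 4$, which simplifies to $|T| \geq |S| + 2$. In particular $|N(S)| = |T| > |S|$, so Hall's condition holds and the desired matching exists.

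I expect the only delicate point to be the bookkeeping in the edge count: one must check that every edge of $G$ incident to a vertex of $S$ survives in the induced subgraph $H$, which holds precisely because $T$ is defined as the \emph{full} neighborhood $N(S)$ rather than an arbitrary superset. The planarity and bipartiteness of $H$ are inherited from $G$ for free, and the hypothesis $n \geq 3$ needed for the bipartite edge bound is automatically met because the degree-four condition already forces $|T| \geq 4$. The slack of $2$ in $|T| \geq |S| + 2$ is more than Hall requires, confirming the statement comfortably.
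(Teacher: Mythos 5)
Your proposal is correct and follows essentially the same route as the paper: verify Hall's condition for $C$ by combining Euler's formula with the fact that faces of a simple bipartite plane graph have length at least four, against the lower bound of four edges per vertex of $S$. The one place where you are less careful than the paper is connectivity: you derive $m \leq 2(|S|+|T|) - 4$ from ``$n - m + f = 2$'', which is Euler's formula for \emph{connected} plane graphs, whereas the induced subgraph $H$ may well be disconnected (e.g., when two clause vertices in $S$ have disjoint neighborhoods); the paper handles this by writing Euler's formula as $|B'|+|C'|-|E'|+|F'| = 1+k$ with $k$ the number of components, which only strengthens the final inequality to $|C'| \leq |B'| - 1 - k$. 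Your argument is easily repaired: every component of $H$ contains a vertex of $S$, hence has at least five vertices, so the bound $m_i \leq 2n_i - 4$ applies to each component separately and summing gives $m \leq 2(|S|+|T|) - 4k \leq 2(|S|+|T|)-4$; with that one-line fix your proof is complete.
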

\begin{proof}
Consider any plane embedding of $G$ and let $C' \subseteq C$ be an arbitrary subset of $C$.
Then we denote by $B' \subseteq B$ the elements of $B$ that are adjacent to an element of~$C'$.
Let $G'$ be the subgraph of $G$ induced by $C' \cup B'$, let $E'$ be the set of edges of $G'$, and let $F'$ be the set of the faces of $G'$.
Further, let $k \geq 1$ be the number of connected components of $G'$.
Euler's formula states that $|B'|+|C'|-|E'|+|F'| = 1 + k$.
Consider any edge $\{b, c\}$ and the two ways $(b,c)$ and $(c,b)$ of directing it.
To every directed edge, we assign the face that is directly to the left of this edge.
Hence, every face is assigned to each of the directed edges along its boundary.
Since $G'$ is bipartite, every face is assigned to at least four directed edges.
We get
\begin{equation}\label{eqn_faces}
2|E'| = \sum_{f \in F'} |f| \geq 4|F'| \enspace , %
\end{equation}
where $|f|$ is the number of directed edges along the boundary of the face~$f$.
(Note that an undirected edge may have same face~$f$ on both of its sides;
such an edge counts twice for~$|f|$.)
Combining Euler's formula and~(\ref{eqn_faces}) to
$
|E'| \geq 2(1 + k - |B'| - |C'| + |E'|)
$
gives the bound
\begin{equation}\label{eqn_edge_upper}
2 |B'| + 2 |C'| - 2 - 2k \geq |E'| \enspace .
\end{equation}
Further,
$|E'| \geq 4|C'|$, as $G'$ is bipartite and every element of $C'$ is incident to at least four edges.
Combining this with~(\ref{eqn_edge_upper}) results in
\begin{equation}\label{eqn_neighborhood}
|C'| \leq |B'| - 1 - k \enspace .
\end{equation}
Recall that $C'$ is an arbitrary subset of $C$, and thus, for every subset of $C$, the set of adjacent elements in $B$ has at least the same cardinality.
Therefore, by Hall's theorem~\cite{hall}, there is a matching on $G$ that covers~$C$.
\end{proof}

\begin{theorem}\label{thm:four_satisfiable}
Every CNF formula with planar incidence graph of $n$ vertices and at least four distinct variables per clause has a satisfying assignment, which can be found in $O(n^{1.5})$ time.
\end{theorem}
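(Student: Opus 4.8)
The plan is to read Lemma~\ref{lem_matching} off the incidence graph and convert the guaranteed matching directly into a satisfying assignment. The incidence graph $G_\phi = (V \cup C, E)$ is bipartite, with one part the variable set $V$ and the other the clause set $C$; the hypothesis that every clause contains at least four \emph{distinct} variables says precisely that every vertex of $C$ has degree at least four. This is exactly the situation of Lemma~\ref{lem_matching} (with the role of $B$ played by $V$ and the role of $C$ by the clauses), so I would apply it to obtain a matching $M$ that covers every clause vertex.

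Next I would build the truth assignment from $M$. Each clause $c$ is matched by $M$ to some variable $v$ that occurs in $c$, and since $M$ is a matching this $v$ is matched to no other clause. I would therefore set $v$ to true if it occurs unnegated in $c$ and to false if it occurs negated, so that the literal of $v$ in $c$, and hence $c$ itself, becomes true; any variable not covered by $M$ may be set arbitrarily. Because the matching assigns a distinct variable to each clause, these settings never conflict, and every clause is satisfied by the literal of its matched variable. This proves satisfiability.

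For the running time I would compute a matching covering $C$ with the Hopcroft--Karp algorithm, which runs in $O(|E|\sqrt{|V \cup C|})$ time; since $G_\phi$ is planar we have $|E| = O(n)$, so this bound is $O(n\sqrt{n}) = O(n^{1.5})$, and extracting the assignment from $M$ costs only linear time.

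Given Lemma~\ref{lem_matching}, the statement is nearly immediate, so there is no real obstacle beyond two points of care: that the matching yields a conflict-free assignment (which holds exactly because distinct clauses receive distinct matched variables, so no variable is asked to take two values), and that the matching is found within the stated time on the sparse planar incidence graph rather than by a generic $O(|V|\,|E|)$ method.
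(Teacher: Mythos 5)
Your proposal is correct and matches the paper's own proof essentially verbatim: both apply Lemma~\ref{lem_matching} to the incidence graph to get a matching covering the clauses, set each matched literal to true (your spelling-out of the conflict-freeness and the negated/unnegated case is just a more explicit version of the paper's ``set the corresponding literal to true''), and obtain the $O(n^{1.5})$ bound via Hopcroft--Karp on a planar, hence sparse, graph.
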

\begin{proof}
Let $\phi$ be a planar SAT instance with vertex set $V$ and clause set $C$, and let $G_\phi = (V \cup C, E)$ be the associated incidence graph.
By Lemma~\ref{lem_matching}, there is a matching on $G_\phi$ covering~$C$, which assigns a distinct variable to each clause (i.e., a system of distinct representatives for the clauses).
If we set the corresponding literal to true, we get an assignment satisfying~$\phi$.
The matching can be found using the algorithm by Hopcroft and Karp~\cite{hopcroft_karp}, which runs in $O(|E|\sqrt{v})$.
As $G$ is planar, $|E| \in O(n)$, from which the claimed time bound follows.
\end{proof}

\paragraph*{Acknowledgments.} The \textsc{Linked Planar 3-SAT} problem has been brought to our attention by Lena Schlipf.
We thank Erik Demaine for pointing out the motivation in platform games, as well as Patrick Schnider, Michael Hoffmann, and Janosch D\"ocker for valuable discussions.
We also thank an anonymous reviewer for the indicated simplification in the proof of Theorem~\ref{thm:planar_1_in_3}.

\bibliographystyle{plainurl}
\bibliography{bibliography}

\begin{thebibliography}{10}

\bibitem{simple_flip_hard}
Oswin Aichholzer, Wolfgang Mulzer, and Alexander Pilz.
\newblock Flip distance between triangulations of a simple polygon is
  {NP}-complete.
\newblock {\em Discrete Comput. Geom.}, 54:368--389, 2015.

\bibitem{nintendo}
Greg Aloupis, Erik~D. Demaine, Alan Guo, and Giovanni Viglietta.
\newblock Classic {N}intendo games are (computationally) hard.
\newblock {\em Theor. Comput. Sci.}, 586:135--160, 2015.

\bibitem{fourcolor}
Kenneth Appel and Wolfgang Haken.
\newblock {\em Every Planar Map is Four-Colorable}.
\newblock Number~98 in Contemporary Mathematics. AMS, 1989.
\newblock With the collaboration of J. Koch.

\bibitem{bodlaender}
Hans~L. Bodlaender.
\newblock A partial {$k$}-arboretum of graphs with bounded treewidth.
\newblock {\em Theor. Comput. Sci.}, 209(1-2):1--45, 1998.

\bibitem{few_lines}
Steven Chaplick, Krzysztof Fleszar, Fabian Lipp, Alexander Ravsky, Oleg
  Verbitsky, and Alexander Wolff.
\newblock The complexity of drawing graphs on few lines and few planes.
\newblock In Faith Ellen, Antonina Kolokolova, and J{\"{o}}rg{-}R{\"{u}}diger
  Sack, editors, {\em Proc. 15th International Symposium on Algorithms and Data
  Structures ({WADS} 2017)}, volume 10389 of {\em Lecture Notes in Computer
  Science}, pages 265--276. Springer, 2017.

\bibitem{darmann_planar}
Andreas Darmann, Janosch D{\"{o}}cker, and Britta Dorn.
\newblock On planar variants of the monotone satisfiability problem with
  bounded variable appearances.
\newblock {\em CoRR}, abs/1604.05588, 2016.

\bibitem{darmann_journal}
Andreas Darmann, Janosch D{\"{o}}cker, and Britta Dorn.
\newblock The monotone satisfiability problem with bounded variable
  appearances.
\newblock {\em Int. J. Foundations Comp. Sci.}, 2017.
\newblock To appear.

\bibitem{de_berg}
Mark de~Berg and Amirali Khosravi.
\newblock Optimal binary space partitions for segments in the plane.
\newblock {\em Int. J. Comput. Geometry Appl.}, 22(3):187--206, 2012.

\bibitem{left_first}
Hubert de~Fraysseix, Patrice Ossona~de Mendez, and J{\'{a}}nos Pach.
\newblock A left-first search algorithm for planar graphs.
\newblock {\em Discrete Comput. Geom.}, 13:459--468, 1995.

\bibitem{planar_drawing}
Hubert de~Fraysseix, J{\'{a}}nos Pach, and Richard Pollack.
\newblock How to draw a planar graph on a grid.
\newblock {\em Combinatorica}, 10(1):41--51, 1990.

\bibitem{demaine_lecture}
Erik Demaine.
\newblock Algorithmic lower bounds: Fun with hardness proofs; {L}ecture 7.
\newblock \url{http://courses.csail.mit.edu/6.890/fall14/scribe/lec7.pdf}.
\newblock Scribe: Quanquan Liu, Jeffrey Bosboom. Retrieved June 21, 2017.

\bibitem{fellows}
Michael~R. Fellows, Jan Kratochv{\'{\i}}l, Matthias Middendorf, and Frank
  Pfeiffer.
\newblock The complexity of induced minors and related problems.
\newblock {\em Algorithmica}, 13(3):266--282, 1995.

\bibitem{counting_truth_assignments}
Eldar Fischer, Johann~A. Makowsky, and Elena~V. Ravve.
\newblock Counting truth assignments of formulas of bounded tree-width or
  clique-width.
\newblock {\em Discrete Appl. Math.}, 156:511--529, 2008.

\bibitem{guibas_seidel}
Leonidas~J. Guibas and Raimund Seidel.
\newblock Computing convolutions by reciprocal search.
\newblock {\em Discrete Comput. Geom.}, 2:175--193, 1987.

\bibitem{hall}
Philip Hall.
\newblock On representatives of subsets.
\newblock {\em J. London Math. Soc.}, s1-10(1):26--30, 1935.

\bibitem{hopcroft_karp}
John~E. Hopcroft and Richard~M. Karp.
\newblock An {$n^{5/2}$} algorithm for maximum matchings in bipartite graphs.
\newblock {\em {SIAM} J. Comput.}, 2(4):225--231, 1973.

\bibitem{hunt}
Harry~B. Hunt~III, Madhav~V. Marathe, Venkatesh Radhakrishnan, and Richard~E.
  Stearns.
\newblock The complexity of planar counting problems.
\newblock {\em {SIAM} J. Comput.}, 27(4):1142--1167, 1998.

\bibitem{ptas}
Sanjeev Khanna and Rajeev Motwani.
\newblock Towards a syntactic characterization of {PTAS}.
\newblock In {\em Proc. 28th Symp. Theory of Computing}, pages 329--337, 1996.

\bibitem{nested}
Donald~E. Knuth.
\newblock Nested satisfiability.
\newblock {\em Acta Inf.}, 28(1):1--6, 1990.

\bibitem{knuth_ortho}
Donald~E. Knuth and Arvind Raghunathan.
\newblock The problem of compatible representatives.
\newblock {\em SIAM J. Discret. Math.}, 5(3):422--427, 1992.

\bibitem{co_nested}
Jan Kratochv{\'{\i}}l and Mirko K{\v{r}}iv{\'{a}}nek.
\newblock Satisfiability of co-nested formulas.
\newblock {\em Acta Inf.}, 30(4):397--403, 1993.

\bibitem{clause_cycle}
Jan Kratochv{\'{\i}}l, Anna Lubiw, and Jaroslav Ne{\v{s}}et{\v{r}}il.
\newblock Noncrossing subgraphs in topological layouts.
\newblock {\em {SIAM} J. Discrete Math.}, 4(2):223--244, 1991.

\bibitem{lichtenstein}
David Lichtenstein.
\newblock Planar formulae and their uses.
\newblock {\em SIAM J. Comput.}, 11:329--343, 1982.

\bibitem{mansfield}
Anthony Mansfield.
\newblock Determining the thickness of graphs is {NP}-hard.
\newblock {\em Math. Proc. Cambridge Phil. Soc.}, 93(01):9--23, 1983.

\bibitem{naesat}
Bernard M.~E. Moret.
\newblock Planar {NAE3SAT} is in {P}.
\newblock {\em SIGACT News}, 19(2):51--54, June 1988.

\bibitem{mulzer_rote}
Wolfgang Mulzer and G{\"{u}}nter Rote.
\newblock Minimum-weight triangulation is {NP}-hard.
\newblock {\em J. {ACM}}, 55(2):11:1--11:29, 2008.

\bibitem{schnyder}
Walter Schnyder.
\newblock Embedding planar graphs on the grid.
\newblock In David~S. Johnson, editor, {\em Proc. 1st Symp. on Discrete
  Algorithms}, pages 138--148. {SIAM}, 1990.

\bibitem{tippenhauer}
Simon Tippenhauer.
\newblock On planar 3-{SAT} and its variants.
\newblock Master's thesis, Freie Universit{\"a}t Berlin, 2016.

\bibitem{tovey}
Craig~A. Tovey.
\newblock A simplified {NP}-complete satisfiability problem.
\newblock {\em Discrete Appl. Math.}, 8(1):85--89, 1984.

\bibitem{vadhan}
Salil~P. Vadhan.
\newblock The complexity of counting in sparse, regular, and planar graphs.
\newblock {\em {SIAM} J. Comput.}, 31(2):398--427, 2001.

\end{thebibliography}

\end{document}